\documentclass[11pt]{article}
\pdfoutput=1
\usepackage{tensor, physics, caption, subcaption} 
\usepackage[dvipsnames]{xcolor}
\usepackage{amsfonts}
\usepackage{amssymb}
\usepackage{mathrsfs}
\usepackage{graphicx}
\usepackage{amsmath,amsthm,amssymb,amscd}
\usepackage[all]{xy}
\usepackage{multirow}
\usepackage{color}
\usepackage{float}
\usepackage{mathtools,slashed,tensor}
\usepackage{bbold}
\usepackage{bbm}
\usepackage{slashed}
\usepackage{cancel}
\usepackage[normalem]{ulem}
\usepackage{cite, comment}
\usepackage{mdframed}

\newtheorem{lemma}{Lemma}
\newtheorem{theorem}{Theorem}

\theoremstyle{definition}
\newtheorem{definition}{Definition}
\newtheorem{assumption}{Assumption}
\newtheorem{condition}{Condition}

\theoremstyle{remark}
\newtheorem{remark}{Remark}
\usepackage{jheppub}

\def\be{\begin{equation}}
\def\ee{\end{equation}}
\def\bea{\begin{eqnarray}}
\def\eea{\end{eqnarray}}

\preprint{LITP-25-16}

\title{A Universality Theorem for the Quantum Thermodynamics of Near-Extremal Black Holes }

\author[a]{Leopoldo A.~Pando Zayas}

\author[a]{ and Jingchao Zhang}

\emailAdd{ lpandoz@umich.edu, jingchaz@umich.edu }

\affiliation[a]{Leinweber Institute for Theoretical Physics, 
University of Michigan, Ann Arbor, MI 48109, USA}

\abstract{We prove that the one-loop contribution from tensor modes to the thermodynamic entropy of near-extremal black holes is universal. Our proof applies to asymptotically flat, Anti-de-Sitter and de-Sitter black holes; it also covers spherical, axial and planar symmetries. We consider black hole configurations with and without matter sectors and explicitly discuss  Abelian gauge fields and neutral scalar fields with arbitrary potential. We demonstrate that under certain  conditions, the thermodynamics of near-extremal black holes  contains a one-loop contribution from the tensor modes that equals $\frac{3}{2}\log (T_{\rm Hawking}/T_q)$. The proof of this theorem also shows explicitly how the Schwarzian modes appear universally in near-extremal geometries in dimensions four, five and six. We apply this theorem to Kerr-de-Sitter black holes as an explicit example.}

\begin{document}
\today 
\maketitle

\newpage
\section{Introduction}

Black hole thermodynamics is one of the most impressive results in the classical physics of black holes, it opens a door to quantum aspects of gravity by querying a microscopic explanation of the entropy. One of the most significant achievements of string theory resides precisely in providing a microscopic explanation for the entropy of certain black holes \cite{Strominger:1996sh}. More recently, in the context of the AdS/CFT correspondence, large classes of asymptotically AdS black holes have been given a statistical interpretation in terms of counting certain operators on the field theory dual \cite{Benini:2015eyy,Cabo-Bizet:2018ehj, Choi:2018hmj, Benini:2018ywd}. 

There are various approaches to establishing the thermodynamic properties of black holes. One possibility is to start from the definition of conserved charges (mass, angular momentum and electric charge)  and to establish thermodynamics relation as a statement of probe matter falling behind the horizon leading to Smarr relations \cite{Wald:1984rg}. One can also arrive at black hole thermodynamics by considering the properties of two nearby solutions to  Einstein equations with conserved charges $M+\delta M, J+\delta J$  \cite{Townsend:1997ku}. Most relevant for use is the approach based on the gravitational path integral with appropriate boundary conditions as originally formulated by Gibbons and Hawking \cite{Gibbons:1976ue}. In this manuscript we will use the Euclidean path integral approach including its one-loop order leading to a universal quantum result for the thermodynamics of near-extremal black holes. 

For near-extremal black holes, it has been long known that their thermodynamics must breakdown  at very low temperatures \cite{Preskill:1991tb} (see also \cite{Maldacena:1998uz,Page:2000dk}). Recently, a thermodynamic correction, first understood in the context of the Jackiw-Teitelboim (JT) two-dimensional quantum gravity, has helped elucidate the thermodynamics corrections that are central to near-extremal black holes. A key role in correcting the thermodynamics of near-extremal black holes is played by certain zero modes in the long near-AdS$_2$ throat. The systematic treatment of such zero modes has been shown to affect the thermodynamics of near-extremal black holes \cite{Iliesiu:2020qvm,Heydeman:2020hhw,Boruch:2022tno,Iliesiu:2022onk}, including also in rotating black holes where a dimensional reduction to an effective two-dimensional theory is not available \cite{Kapec:2023ruw,Rakic:2023vhv,Maulik:2024dwq}.

The study of this particular type of gravitational quantum corrections is one of the few situations where a controlled approximation exists that does not require a full quantum theory of gravity; it suffices to carefully track the role of a few modes. Naturally, the implications of these quantum fluctuations in the throat have been investigated beyond thermodynamics and into the realm of more dynamical questions such as their influence in Hawking radiation~\cite{Brown:2024ajk,Maulik:2025hax,Lin:2025wof,Bhattacharjee:2025wfv,Rakic:2025svg,Luo:2026epp} and broader aspects of the quantum cross section~\cite{Emparan:2025sao,Biggs:2025nzs,Emparan:2025qqf,Betzios:2025sct,Jiang:2025cyl}; there have also been some studies exploring implications for aspects of the AdS/CFT correspondence~\cite{Daguerre:2023cyx,Liu:2024gxr,Liu:2024qnh,Nian:2025oei,PandoZayas:2025snm,Cremonini:2025yqe,Gouteraux:2025exs,Kanargias:2025vul}.

Through the lens of the gravitational path integral we show that, as in previous cases, the naive evaluation is plagued by infrared divergencies due to the presence of tensor, vector and gauge zero modes. One is required to first regularize these divergences by turning on a small temperature and subsequently analyze the imprint of such small temperature in the physics. Given the ubiquity of this operation, in this manuscript we aim to distill the results of various works into a {\it theorem}, extending key observations of \cite{Maulik:2024dwq}. As an explicit new example, we consider the  quantum-corrected thermodynamics of rotating asymptotically de-Sitter spacetime. Throughout we emphasize how the treatment changes from the simple spherically symmetric cases to axial-symmetric ones which are   applicable to rotating near-extremal black holes.   
    
The manuscript is organized as follows. In Section~\ref{Sec:A brief review to the procedure}, we briefly review the procedure of Euclidean path integral for near-extremal black holes. and set up the theory that we will consider in the rest of the paper. In Section~\ref{Sec:Ansatz:}, we provide two lemmas that help characterize the general form of near-horizon extremal configurations. We impose an empirical assumption of the small temperature perturbation of the metric and matter fields. We prove the existence of the tensor zero modes as another lemma in Section~\ref{Sec: Tensor zero modes} and show  how the higher dimensional tensor zero modes relate to the boundary curve reparameterization modes in the framework JT gravity. In Section~\ref{Sec: cancellation}, we prove that only two terms in the Lichnerowicz operator contribute to the lifted eigenvalues of the would-be zero modes. We show that, in four, five and six dimensions, the lifted eigenvalues always contribute $\frac{3}{2}\log{T_{\rm Hawking}}$  to the thermodynamic entropy. We also comment on further generalizations to higher dimensions. We consider planar symmetric solutions in Section~\ref{Sec: Planar}. In Section~\ref{Sec: Kerr_dS}, we provide an explicit example of the application of our theorem to the case of rotating asymptotically de-Sitter black holes, and comment on some peculiarities of asymptotic de-Sitter black holes.

\section{Review of one-loop temperature corrections}
\label{Sec:A brief review to the procedure}
In this section we briefly review the Euclidean path integral evaluation that has been implemented in a number of  explicit examples of near-extremal black holes \cite{Kapec:2023ruw,Rakic:2023vhv,Maulik:2024dwq,Maulik:2025phe,Blacker:2025zca}. We focus on the cases of dimension four and above, some discussion of the three-dimensional case can be found in \cite{Ghosh:2019rcj} which our approach covers as a particular case. Our goal is to take the universal result gleamed in \cite{Maulik:2024dwq}, review the key ingredients, and adapt this procedure to a general setup, therefore establishing a theorem of the logarithmic temperature correction to the extremal black hole entropy.

The thermodynamic entropy of a black hole can be obtained from the thermal partition function through the Euclidean path integral \cite{Gibbons:1976ue}: 
\begin{equation}
    Z=\int[\mathcal{D}g][\mathcal{D}\Phi]e^{-I[g,\Phi]},
\end{equation}
where $g$ is the metric field, while $\Phi$ represents any possible matter fields. We use $I$ to denote the Euclidean action
\begin{equation}
    I=I_{bulk}+I_{boundary}+I_{gauge-fixing}.
\end{equation}
By using saddle-point approximation
\begin{equation}
    Z\approx Z_0=e^{-I[\bar{g},\bar{\Phi}]},
\end{equation}
we find the well-known thermodynamic entropy of a black hole that is proportional to the area of its horizon
\begin{equation}
    S_0= \frac{A_h}{4}.
\end{equation}
In most treatable cases, the saddle-point approximation works well because loop corrections are suppressed by Newton's constant $G_N$. For near-extremal black holes this paradigm breaks down due to the presence of certain would-be zero modes that become strongly coupled in the $T\to 0$ limit. 

Consider small fluctuations around a classical saddle solution $(\bar{g}, \bar{\Phi})$:
\begin{equation}
    g=\bar{g}+h,\quad \Phi=\bar{\Phi}+\phi.
\end{equation}
The partition function, up to one loop order, is given by 
\begin{equation}
\label{Eq:GPI-1L}
    Z=Z_0\int[\mathcal{D}h][\mathcal{D}\phi]e^{-\frac{1}{2}h^*\,\Delta_L\,h-\frac{1}{2}\phi^*\,\Delta_M\,\phi},
\end{equation}
where $\Delta_L,\, \Delta_M$ are the corresponding quadratic operators. For near-extremal black holes, one finds that there are modes with eigenvalues that vanish as one decreases the temperature of the background black hole configuration
\begin{equation}
\lim_{T\to0}\Delta_Lh_{zero}=0
\end{equation}
Beyond these graviton zero modes, there could also be matter field zero modes. For example, there is a careful discussion of the role of gauge (photon) zero modes \cite{Blacker:2025zca}. In this manuscript we are interested in universal aspects and focus only on the one-loop correction from the metric fluctuations. The Gaussian integration of graviton fluctuations in \eqref{Eq:GPI-1L} gives
\begin{equation}
\label{Eq: one-loop formula}
    \int[\mathcal{D}h]e^{-\frac{1}{2}h^*\,\Delta_L\,h}=(\det \Delta_L)^{-\frac{1}{2}}.
\end{equation}
The above expression diverges if the determinant $\Delta_L$ (given as the product of all eigenvalues) admits a vanishing eigenvalue. In other words, the quantum correction becomes dominant when the temperature is small enough, implying that the  semiclassical picture of black hole thermodynamics breaks down for extremal black holes. This breakdown was anticipate in \cite{Preskill:1991tb,Maldacena:1998uz,Page:2000dk} by explicitly showing that the semiclassical picture breaks down at small temperatures and identifying the corresponding scale. The proper treatment of these zero modes has been elaborated in a series of works \cite{Iliesiu:2022onk,Banerjee:2023quv}, for Kerr black holes \cite{Kapec:2023ruw,Rakic:2023vhv,Maulik:2024dwq,Arnaudo:2024rhv}, and more recently for dS \cite{Maulik:2025phe,Blacker:2025zca,Arnaudo:2025btb}. To extract how the thermodynamics changes when the temperature is small enough, we should use a small-temperature black hole as the background configuration
\begin{equation}
g=\bar{g}+T\delta g\quad\Phi=\bar{\Phi}+T \delta\Phi, 
\end{equation}
where \{$\bar{g}$,$\bar{\Phi}$\} is the extremal configuration. By turning on a small temperature, one finds that the ``zero modes"  now have eigenvalues lifted from zero at order $T$:
\begin{equation}
    \Delta_L h_{zero}=0+\delta \Lambda\; h_{zero},\quad\delta\Lambda=\tilde{\Lambda}\;T+\mathcal{O}(T^2).
\end{equation}
Keeping the leading order of $\delta\Lambda$, after proper regularization, one finds \eqref{Eq: one-loop formula} gives a one-loop correction to the partition function 
\begin{equation}
    Z\sim Z_0\;T^{\frac{3}{2}},
\end{equation}
which leads to a correction to the entropy of 
\begin{equation}
    \delta S=\frac{3}{2}\log{T}.
\end{equation}
Matter zero modes can also contribute to the quantum correction of entropy. As the main purpose of this paper is to show the universality of the $\frac{3}{2}\log{T}$ correction from the metric tensor zero modes, we do not consider matter field fluctuations from now on.

Based on the accumulated evidence from various classes of examples, in the rest of the paper, we will perform a general analysis and present lemmas and a theorem to explain the generality of this $\tfrac{3}{2}\log{T}$ correction and to elucidate the reasons behind this universality.

Before going into the next section, let us first set up the theory we consider. In this paper, we will only consider black holes in $D\ge4$ dimensional spacetime. We work with a general second derivative theory describing Einstein gravity coupled to Abelian vectors $A^I\,(I=1\dots N)$ and uncharged scalars $\phi^A (A=1\dots M)$ with the Euclidean action
\begin{equation}
\label{Eq: Maxwell action}
    I_{bulk}=-\frac{1}{16\pi G_N} \int_{\mathcal{M}} d^D x \sqrt{-g}\,\left(R-\frac{1}{2}f_{AB}(\phi)\partial_{\mu}\phi^A\partial^{\mu}\phi^B- V(\phi)- \, \mathcal{C}_{IJ}(\phi)F^{I}_{\mu\nu}F^{J\mu\nu}\right),
\end{equation}
where $F^I\equiv dA^I$ is the field strength. An arbitrary scalar potential is given by $V(\phi)$, which allows for a cosmological constant. As we only consider metric fluctuation in this paper, we only need a gauge-fixing term for $h$
\begin{equation}
\label{Eq: gauge fixing}
    I_{gauge-fixing}= \frac{1}{32\pi G_N} \, \int_\mathcal{M}d^nx\,\bar{g}_{\mu\nu} 
\left( \bar{\nabla}_\alpha h^{\alpha\mu} - \frac{1}{2} \bar{\nabla}^\mu h^\alpha_{\ \alpha} \right)
\left( \bar{\nabla}_\beta h^{\beta\nu} - \frac{1}{2} \bar{\nabla}^\nu h^\beta_{\ \beta} \right),
\end{equation}
where the bars indicate that the covariant derivative operators are compatible with the extremal metric $\bar{g}$. This fixes metric fluctuation in the ``traceless and divergence free gauge"
\begin{equation}
\label{Eq: gauge}
    \bar{h}^\mu_{\,\,\mu}= h_{\mu\nu}\bar{g}^{\mu\nu}=0,\quad\bar{\nabla}^\mu h_{\mu\nu}=0.
\end{equation}
For convenience of future use, we quote the Lichnerowicz operator $\Delta_L$ here (more details can be found in  \cite{Maulik:2024dwq}),
\begin{equation}
\label{Eq: Linchnerowichz operator}
    h_{\alpha\beta}^*\Delta^{\alpha \beta, \mu \nu}_{L}h_{\mu\nu}=-\frac{1}{16\pi G}h_{\alpha\beta}^*\left(\Delta^{\alpha \beta, \mu \nu}_{EH}-V(\phi)\,\Delta^{\alpha \beta, \mu \nu}_{1}-\Delta^{\alpha\beta,\mu\nu}_\phi-2\Delta^{\alpha \beta, \mu \nu}_{F}\right)h_{\mu\nu},
\end{equation}
where
\begin{align}
\nonumber
    h_{\alpha\beta}^*\Delta^{\alpha \beta, \mu \nu}_{EH}h_{\mu\nu}=& h_{\alpha \beta}^*(\frac{1}{2} \bar{g}^{\alpha \mu} \bar{g}^{\beta \nu} \bar{\square}-\frac{1}{4} \bar{g}^{\alpha \beta} \bar{g}^{\mu \nu} \bar{\square}+ \bar{R}^{\alpha \mu \beta \nu}\\
&+ \bar{R}^{\alpha \mu} \bar{g}^{\beta \nu}-\bar{R}^{\alpha \beta} \bar{g}^{\mu \nu}-\frac{1}{2} \bar{R} \bar{g}^{\alpha \mu} \bar{g}^{\beta \nu}+\frac{1}{4} \bar{R} \bar{g}^{\alpha \beta} \bar{g}^{\mu \nu}) h_{\mu \nu},\\
h_{\alpha \beta}^* \Delta_{1}^{\alpha \beta, \mu \nu} h_{\mu \nu}=&h_{\alpha\beta}^*(-\frac{1}{2}g^{\alpha\mu}g^{\beta\nu}+\frac{1}{4}g^{\alpha\beta}g^{\mu\nu})h_{\mu\nu},\\
\nonumber
h_{\alpha \beta}^* \Delta_{F}^{\alpha \beta, \mu \nu} h_{\mu \nu}=&h_{\alpha \beta}^*\, \mathcal{C}_{IJ}(\phi)\Bigl( -\frac{1}{8} F^I_{\rho\sigma}F^{J\rho\sigma}\left(2 g^{\alpha \mu} g^{\beta \nu}-g^{\alpha \beta} g^{\mu \nu}\right)+ F^{I\alpha \mu} F^{J\beta \nu}\\
&+2 F^{I\alpha \gamma} F^{J\mu}{}_{\gamma} g^{\beta \nu}-F^{I\alpha \gamma} F^{J\beta}{}_{\gamma} g^{\mu \nu} \Bigr) h_{\mu \nu},\\
\nonumber
h_{\alpha \beta}^* \Delta_{\phi}^{\alpha \beta, \mu \nu} h_{\mu \nu}=&h_{\alpha \beta}^*\,f_{AB}(\phi)\Bigl(-\frac{1}{8}\partial_\rho\phi^A\partial^{\rho}\phi^B\left(2g^{\alpha\mu}g^{\beta\nu}-g^{\alpha\beta}g^{\mu\nu}\right)\\
&+g^{\beta\nu}\partial^{\alpha}\phi^{A}\partial^{\mu}\phi^B-\frac{1}{2}g^{\mu\nu}\partial^{\alpha}\phi^{A}\partial^{\beta}\phi^B\Bigr)h_{\mu \nu}.
\end{align} 




\section{The near-horizon extremal configuration}
\label{Sec:Ansatz:}
The quantum correction to the thermodynamics of near-extremal black holes is generated by strong quantum fluctuations in the near-horizon region of the black holes. In this section, we firstly set up a general form of the near-horizon configuration, including the metric and matter fields for which will closely follow \cite{Kunduri:2008rs} and implement a direct generalization to higher dimensions. Secondly, we propose an empirical assumption of the small temperature perturbation of the metric and the gauge field, which will serve as a regulator for the tensor zero modes in Section \ref{Sec: cancellation}.

\subsection{The background metric}
In this subsection, we investigate the general form of the near-horizon geometry of near-extremal black holes. We draw on results described in Appendix A of \cite{Friedrich:1998wq}. In a neighborhood of a Killing horizon of a black hole, one can introduce Gaussian null coordinates $\{v, r, x^a\}$, in which the metric takes the form of \cite{Friedrich:1998wq}
\begin{equation}
\label{Eq: near-horizon non-extremal}
    ds^{2}
  = r\, \mathcal{F}(r,x)\, dv^{2}
  + 2\, dv\, dr
  + 2 r\, h_{a}(r,x)\, dv\, dx^{a}
  + \gamma_{ab}(r,x)\, dx^{a} dx^{b},
\end{equation}
where $\mathcal{F},\,h_a,\,\gamma_{ab}$ are smooth functions in $r,\,x$. The sub-indices $a,b$ go from $1$ to $D-2$. The horizon corresponds to $r=0$. 
\begin{definition}
\label{def: 1}
We define a black hole to be at its \textit{quadratic extremality} if: (1) it has a Killing horizon; (2) in the neighborhood of its Killing horizon the metric \eqref{Eq: near-horizon non-extremal} satisfies that $\mathcal{F}(r,x)=r \,\tilde{\mathcal{F}}(r,x)$ where $\tilde{\mathcal{F}}$ is a smooth function and $\tilde{\mathcal{F}}(0,x)\neq0$.
\end{definition}

\begin{remark}
\label{rem: 1}
The above definition is equivalent to $g_{vv}\sim r^2$ as $r\to0$. This is the geometric behavior of the metric when two of its horizons coincide.
\end{remark}

Given the above definition, we have that the metric in the neighborhood of a Killing horizon of a black hole at its quadratic extremality takes the form of 
\begin{equation}
    ds^{2}
  = r^2\, \tilde{\mathcal{F}}(r,x)\, dv^{2}
  + 2\, dv\, dr
  + 2 \,r\, h_{a}(r,x)\, dv\, dx^{a}
  + \gamma_{ab}(r,x)\, dx^{a} dx^{b}.
\end{equation}
By taking the near-horizon limit
\begin{equation}
\label{Eq: near-horizon limit}
    v\to\frac{v}{\epsilon},\qquad r\to\epsilon \,r,
\end{equation}
and $\epsilon\to0$, one gets
\begin{equation}
    ds^{2}
  = r^2\, \tilde{\mathcal{F}}(x)\, dv^{2}
  + 2\, dv\, dr
  + 2\, r\, h_{a}(x)\, dv\, dx^{a}
  + \gamma_{ab}(x)\, dx^{a} dx^{b},
\end{equation}
where $\tilde{\mathcal{F}}(x)\equiv\tilde{\mathcal{F}}(0,x),\,h_a(x)\equiv h(0,x),\, \gamma_{ab}(x)\equiv\gamma_{ab}(0,x)$.
Then we impose the following additional assumptions to the black hole:

\begin{assumption}
\label{ass: 1}
The black hole has $\lfloor\frac{D-1}{2}\rfloor$ rotational symmetries $U(1)^{\lfloor\frac{D-1}{2}\rfloor}$, where $\lfloor\frac{D-1}{2}\rfloor$ is the maximum number of independent rotation directions a black hole can admit. 
\end{assumption}
Given the additional assumptions above, we are ready to generalize Lemma 1 in \cite{Kunduri:2008rs} to higher dimensions. First, given the rotational symmetries, we can separate $x^a$ into $\lfloor\frac{D-1}{2}\rfloor$ rotational directions $\varphi_i$ and $D-2-\lfloor\frac{D-1}{2}\rfloor$ non-rotational directions $\theta_n$. Then we can rewrite the metric as 
\begin{equation}
\begin{aligned}
ds^{2}
  =&r^2\, \tilde{\mathcal{F}}(\theta_n)\, dv^{2}
  + 2\, dv\, dr
  + 2\, r\, h_{m}(\theta_n)\, dv\, d\theta_m+2\,r\,h_{i}(\theta_n)dv\,d\varphi_i\\
  &+g_{mm}(\theta_n)d\theta_m^2+g_{ij}(\theta_n)d\varphi_id\varphi_j,
\end{aligned}
\end{equation}
where the indices $m,n$ are used for $\theta$ directions, while $i,j$ are used for $\varphi$ directions.
We define a positive function  $\Gamma(\theta_n)$ by
\begin{equation}
    h_{n}=-\frac{\partial\Gamma/\partial\theta_n}{\Gamma},
\end{equation}
and functions $k_i(\theta_n)$ by
\begin{equation}
    h_{i}=\Gamma^{-1}k_i.
\end{equation}
Performing a further coordinate transformation $r\to\Gamma \,r$ leads to 
\begin{equation}
\label{Eq: g bar after Gamma}
    ds^{2} = r^{2} A(\theta_n) dv^{2} + 2 \Gamma(\theta_n) dv\,dr + g_{mm}(\theta_n)d\theta_m^{2} 
+ g_{ij}(\theta_n) \Bigl( d\varphi_i + k^{i}(\theta_n) r dv \Bigr) 
\Bigl( d\varphi_j + k^{j}(\theta_n) r dv \Bigr),
\end{equation}
where $g_{ij}\,k^i=k_j$, $A=g_{ij}\,k^ik^j+\Gamma^2\tilde{\mathcal{F}}$. 
Then, we further impose the following additional condition:
\begin{condition}
\label{con: 1}
The $\theta_m\, \varphi_i$ and $\theta_m\, v$ components of the Ricci tensor vanish.
\end{condition}
By direct evaluation, we find \textit{Condition \ref{con: 1}} implies
\begin{equation}
R_{mi} = \frac{1}{2 \Gamma} g_{ij} \frac{\partial}{\partial\theta_m}k^{j}=0,
\end{equation}
\begin{equation}
    R_{m v} = \frac{r}{\Gamma} \left[ \frac{\partial A}{\partial\theta_m} -\frac{\partial\Gamma}{\partial\theta_m} \frac{A}{\Gamma}  +  \frac{\partial k^{i}}{\partial\theta_m}  k_{i} \right]=0,
\end{equation}
which further implies $k^i$ are constants and $A(\theta_n)=A_0\Gamma(\theta_n)$, where $A_0$ is a constant. Now \eqref{Eq: g bar after Gamma} can be simplified to 
\begin{equation}
\label{Eq: gbar Kunduri}
ds^{2} =  \Gamma(\theta_n)[A_0 r^{2} dv^{2} + 2 \,dv\,dr] + g_{mm}(\theta_n)d\theta_m^{2} 
+ g_{ij}(\theta_n) \Bigl( d\varphi_i + k^{i} r dv \Bigr) 
\Bigl( d\varphi_j + k^{j} r dv \Bigr).
\end{equation}


In appendix \ref{App: Coordinate transformation}, we perform further coordinate transformations that allow for a direct comparison with the backgrounds discussed in \cite{Maulik:2024dwq}. 

Then we conclude the statement above with the following lemma:
\begin{lemma}
\label{lem: 1}
\textbf{(Extended Kunduri-Lucietti-Reall Lemma 1)}
Assume that a black hole solution of the theory given by the action in \eqref{Eq: Maxwell action} is at quadratic extremality and satisfies \textit{Assumption \ref{ass: 1}}, the Euclidean near-horizon metric takes the form of 
\begin{equation}
\label{Eq: BG ansatz}
\begin{aligned}
ds^2=&g_1(\theta_n)\!\left[(y^{2}-1)\,d\tau^{2}+\frac{dy^{2}}{y^{2}-1}\right]\;+\;g_{mm}(\theta_n)\,d\theta_m^{2}\;\\
&+g_{ij}(\theta_n)\!\left(d\varphi_i + i\,k_i\,(y-1)\,d\tau\right)\left(d\varphi_j + i\,k_j\,(y-1)\,d\tau\right),
\end{aligned}
\end{equation}
where repeated $\{i,j\}$ follows the Einstein summation convention while $g_{mm}d\theta_m^2$ term only sums once over $m$. 
\end{lemma}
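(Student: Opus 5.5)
The plan is to start from the simplified near-horizon form \eqref{Eq: gbar Kunduri}, which the preceding discussion already derives. The ingredients are Gaussian null coordinates, the quadratic extremality of Definition~\ref{def: 1}, the near-horizon limit \eqref{Eq: near-horizon limit}, the $U(1)^{\lfloor\frac{D-1}{2}\rfloor}$ symmetry of Assumption~\ref{ass: 1}, and the consequences $k^i=\text{const}$ and $A=A_0\Gamma$ of the vanishing Ricci components. What remains is a chain of coordinate changes that brings the Lorentzian Poincar\'e-type AdS$_2$ fibration into the global Euclidean form \eqref{Eq: BG ansatz}. I would take $A_0>0$ and identify $g_1(\theta_n)\propto\Gamma(\theta_n)$.

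First, diagonalize the AdS$_2$ block. Write $\Gamma[A_0r^2dv^2+2dvdr]$ and set $v=t+\frac{1}{A_0 r}$, so that $dv=dt-\frac{dr}{A_0r^2}$. This gives $\Gamma[A_0 r^2dt^2-\frac{dr^2}{A_0 r^2}]$, which is the Poincar\'e patch of AdS$_2$ up to an overall sign convention. In the fibre term, $k^i r\,dv$ becomes $k^i r\,dt-\frac{k^i}{A_0}\frac{dr}{r}$. The shift $\varphi_i\to\varphi_i+\frac{k^i}{A_0}\log r$ removes the $dr$ piece; this is allowed because $k^i$ is constant and $g_{ij}$ depends only on $\theta_n$. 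After rescaling $t$ by $A_0$, we obtain $\frac{\Gamma}{A_0}[-r^2dt^2+\frac{dr^2}{r^2}]$ plus fibre terms $(d\varphi_i+\tilde k_i r\,dt)$ with constant $\tilde k_i$, keeping careful track of signs.

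Second, pass from Poincar\'e to global/thermal AdS$_2$ coordinates, $-r^2dt^2+dr^2/r^2\to-(y^2-1)dt^2+dy^2/(y^2-1)$. This step rests on the standard fact that these are the same locally AdS$_2$ metric. The key point is that the gauge-field-like connection $r\,dt$ on AdS$_2$ maps to $(y-1)\,dt$ plus an exact form. Both are potentials for the AdS$_2$ volume form, and the difference of the potentials is closed, hence exact locally. So a further shift of $\varphi_i$ by $k_i$ times a function of $(t,y)$ absorbs the difference. The constant $-1$ in $(y-1)$ is then fixed by an additional shift $\varphi_i\to\varphi_i+c_i t$, which is the choice that makes the connection regular at $y=1$.

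Finally, Wick rotate $t=-i\tau$, which produces the factor $i\,k_i(y-1)d\tau$ in \eqref{Eq: BG ansatz}, and rename $g_1=\Gamma/A_0$ with the constants $k_i$ absorbed. The main obstacle I expect is the middle step: exhibiting explicitly the function that relates the two AdS$_2$ connections and checking that the resulting $\varphi_i$ shift is globally harmless, up to the regularity and periodicity identification at the Euclidean horizon. I would handle it by writing the explicit Poincar\'e-to-Rindler map, computing the pulled-back one-form $r\,dt$, and subtracting $(y-1)dt$ to confirm exactness. A secondary point is that the Ricci conditions were imposed only for the $\theta_m\varphi_i$ and $\theta_m v$ components. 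I would need to confirm that the diagonal form $g_{mm}d\theta_m^2$ with no $\theta$ cross terms is an assumed gauge choice for the $\theta$ directions, and I would state it as such.
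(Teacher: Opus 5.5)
Your route is the paper's own (Appendix~\ref{App: Coordinate transformation}). The paper uses a Poincar\'e-type substitution ($v\to T+1/R$, $r\to R/A_0$ there) and then a map to the black-hole patch. It then applies a shift $\varphi_i\to\tilde\varphi_i-k^i\Phi(t,y)+\frac{k^i}{A_0}t$, which removes the $dy$ piece of the fibration and fixes the constant in $(y-1)$, and finally sets $t\to-i\tau$. You observe that $r\,dt$ and $(y-1)\,dt$ are both potentials for the AdS$_2$ area form and hence differ by an exact form. You then fix the constant by regularity at $y=1$, where the $\tau$-circle shrinks. This is a cleaner justification of what the paper does by writing $\Phi$ out explicitly.

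The sign bookkeeping in your first step, however, is wrong in a way that matters. Rescaling $t$ in $\Gamma\bigl[A_0r^2dt^2-\frac{dr^2}{A_0r^2}\bigr]$ gives $-\frac{\Gamma}{A_0}\bigl[-r^2dt^2+\frac{dr^2}{r^2}\bigr]$, not $+\frac{\Gamma}{A_0}[\dots]$. Since $\Gamma>0$, an AdS$_2$ throat requires $A_0<0$ and $g_1=-\Gamma/A_0$, which is the paper's definition. For extremal Reissner--Nordstr\"om, with $\rho=r-r_0$, one has $g_{vv}\simeq-\rho^2/r_0^2$, so $A_0=-1/r_0^2$ and $g_1=r_0^2$. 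With your choice $A_0>0$ the two-dimensional factor has positive curvature, i.e.\ it is dS$_2$. That is the ``$-$AdS$_2$'' with $g_1<0$ that the paper meets in the Nariai limit. The statement survives for either sign with $g_1=-\Gamma/A_0$. So drop the assumption $A_0>0$ and require $A_0\neq0$ instead, which your substitution $v=t+1/(A_0r)$ needs anyway.

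Separately, the lemma does not assume Condition~\ref{con: 1}, so you cannot simply import $k^i=\text{const}$ and $A=A_0\Gamma$. The paper obtains them (Remark~\ref{rem: 2}) from the Einstein equations evaluated on the field strength of Lemma~\ref{lem: 2} with $\theta$-dependent scalars, and you should say so. Your caveat that the diagonal $g_{mm}\,d\theta_m^2$ (with no $d\theta_m\,d\varphi_i$ terms) is a tacit coordinate choice is fair; the paper adopts that form without comment.
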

\begin{remark}
\label{rem: 2}
    In the above proof, we have assumed \textit{Condition \ref{con: 1}} is true. However, when we prove \textit{Lemma \ref{lem: 2}} in the next subsection, we will see \textit{Condition \ref{con: 1}} can be proved from \textit{Assumption \ref{ass: 1}} for any black hole solution of the theory \eqref{Eq: Maxwell action} at \textit{quadratic extremality}. Therefore, we do not list \textit{Condition \ref{con: 1}} as an independent assumption in the above lemma.
\end{remark}

\begin{remark}
\label{rem: 3}
    Inside the square brackets of \eqref{Eq: BG ansatz} is the Euclidean $AdS_2$ expected from the near-horizon geometry of extremal black holes. The metric function $g_1(\theta_n)$ describes how the $AdS_2$ is warped when the black hole is axial symmetric. Each $k_i$ is related to the angular velocity along $\varphi_i$ at the horizon. In spherically symmetric cases, such as  Reissner--Nordstr\"om black holes, the functions $g_1,\,g_{mm},\,g_{ij}$ degenerate to constants independent of $\theta_n$, and all $k_i$ vanish.
\end{remark}

It is instructive to explicitly display the near-horizon extremal geometry in four dimensions:
\begin{equation}
\label{Eq: BG ansatz 4d}
ds^2=g_1(\theta)\!\left[(y^{2}-1)\,d\tau^{2}+\frac{dy^{2}}{y^{2}-1}\right]\;+\;g_2(\theta)\,d\theta^{2}+g_{3}(\theta)\!\left(d\varphi + i\,k\,(y-1)\,d\tau\right)^2.
\end{equation}
Information of the black hole asymptotics, charges, angular momenta, etc., is contained in the functions $g_1(\theta),\,g_2(\theta),\,g_3(\theta)$ and constant $k$.

\subsection{The matter sector}
For a plethora of cases, it was found in \cite{Maulik:2024dwq} that the one-loop contribution  to the Euclidean path integral from the tensor zero modes is insensitive to the matter sector.  To prove this more generally, we need a general form of the matter fields.


The general form of a near-horizon $U(1)$ gauge field in four and five dimensions is given by Section 2.4 of \cite{Kunduri:2008rs}. This argument can be naturally generalized to higher dimensions. We introduce the following extended lemma:
\begin{lemma}
\label{lem: 2}
\textbf{(Extended Kunduri-Lucietti-Reall Section 2.4)}
    Assume that a black hole solution of the theory given by \eqref{Eq: Maxwell action} is at quadratic extremality and satisfies \textit{Assumption \ref{ass: 1}}, the near-horizon limits of a $U(1)$ gauge field strength and a scalar field (in Euclidean time) take the form
\begin{equation}
\label{Eq: Ansatz F}
    \bar{F}^I=F^I_{\tau y}(\theta_n)\,d\tau\wedge dy-i\,k_iF^I_{mi}(\theta_n)(y-1)\,d\tau\wedge d\theta_m+F^I_{mi}(\theta_n)d\theta_m\wedge d\varphi_i,
\end{equation}
\begin{equation}
    \label{Eq: background scalar}
    \bar{\phi}^A=\bar{\phi}^A(\theta_n),
\end{equation}
where $k_i$ is the same constant as in \eqref{Eq: BG ansatz}. The Einstein summation convention applies to the indices $m,i$.
\end{lemma}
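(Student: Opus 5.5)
The plan is to follow the Kunduri--Lucietti--Reall argument of Section 2.4 of \cite{Kunduri:2008rs}, run in Gaussian null coordinates and then translated into the Euclidean coordinates of Lemma \ref{lem: 1}. In a neighbourhood of the Killing horizon of a quadratically extremal black hole we assume, as in \cite{Kunduri:2008rs}, that the matter fields are invariant under the Killing vector $\partial_v$ and the $U(1)^{\lfloor\frac{D-1}{2}\rfloor}$ of Assumption \ref{ass: 1}. We also assume the gauge potential can be chosen regular at the horizon. Then $A^I = A^I_v(r,x)\,dv + A^I_r\,dr + A^I_a(r,x)\,dx^a$ and $\phi^A=\phi^A(r,\theta_n)$, with no dependence on $v$ or $\varphi_i$. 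Gauge freedom removes $A_r$.

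\textbf{Behaviour at the horizon.} Regularity and $\partial_v$-invariance force $A^I_v = r\,\alpha^I(r,\theta_n)$ up to a constant gauge shift; this is the usual statement that $A_v$ vanishes on a bifurcate or degenerate horizon in a regular gauge. After the scaling \eqref{Eq: near-horizon limit} and $\epsilon\to 0$, and with the same rescaling $r\to\Gamma r$ as in the metric, the limit exists. It gives
\[
\bar A^I = \alpha^I(\theta_n)\,r\,dv + A^I_i(\theta_n)\,d\varphi_i + A^I_m(\theta_n)\,d\theta_m ,
\]
and the last term can be gauged away where appropriate. Likewise $\phi^A(\epsilon r,\theta_n)\to \phi^A(0,\theta_n)$, which is \eqref{Eq: background scalar}.

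\textbf{Imposing the equations of motion.} The content of the lemma is the relation that ties the $dv\wedge d\theta_m$ component of $F^I$ to the $d\theta_m\wedge d\varphi_i$ component through the same constants $k^i$ that appear in the metric. To get it, I would use the equations of motion of \eqref{Eq: Maxwell action}, namely the Maxwell equations $d(\mathcal C_{IJ}\star F^J)=0$ and the Einstein equations $R_{\mu\nu}=T_{\mu\nu}$. Each is evaluated on the near-horizon metric \eqref{Eq: g bar after Gamma}, still with $k^i(\theta)$ and $A(\theta)$ arbitrary.

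1. Compute $T_{m i}$ and $T_{m v}$. The scalar contributions vanish because $\partial_i\phi=\partial_v\phi=0$.
2. Show, using the Bianchi identity $dF=0$ and the Maxwell equation, that these components vanish. This proves Condition \ref{con: 1}, as announced in Remark \ref{rem: 2}, and hence that the $k^i$ are constant and $A=A_0\Gamma$.
3. With constant $k^i$, apply $dF=0$ to the $d v\wedge d\theta_m\wedge d\varphi_i$ and $dr\wedge d\theta\wedge d\theta$ parts. This gives $\partial_m(\alpha^I + k^i A^I_i)=0$, so the combination $F_{vr}$ is $\theta$-independent up to the $\Gamma$ factor. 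It also shows that the $dv\wedge d\theta_m$ component equals $r\,k^i\partial_m A^I_i = r\,k^i F^I_{mi}$. Equivalently, $F$ is invariant under the $SL(2,\mathbb R)$ generated together with the $AdS_2$ isometries.

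\textbf{Passing to Euclidean coordinates.} Finally, I would apply the coordinate change of Appendix \ref{App: Coordinate transformation}, $v,r\mapsto \tau,y$ with $r\,dv\mapsto i(y-1)d\tau$ and the accompanying shift of $\varphi_i$. The term $r\,k^iF_{mi}\,dv\wedge d\theta_m$ then becomes $-i k_i F_{mi}(y-1)\,d\tau\wedge d\theta_m$, and $F_{vr}\,dv\wedge dr$ becomes $F_{\tau y}\,d\tau\wedge dy$, which reproduces \eqref{Eq: Ansatz F}.

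\textbf{Main obstacle.} I expect the hard step to be step 2, closing Condition \ref{con: 1}. With a general $\mathcal C_{IJ}(\phi)$ and several gauge fields, the stress tensor components $T_{mi}$ and $T_{mv}$ are quadratic in $F$ and do not obviously vanish. I would first try the approach of \cite{Kunduri:2008rs}: use the $(v,r)$ boost/scaling invariance of the near-horizon limit. Under $v\to v/\lambda$, $r\to\lambda r$, every near-horizon tensor must have a fixed weight. Components with one $\theta$ index and one $v$ index then carry an explicit factor $r$, and matching powers of $r$ in the Einstein equations forces the $\theta$-derivatives of $k^i$ to vanish. If that fails, I would fall back on the $t$--$\varphi$ reflection symmetry of stationary axisymmetric solutions, which kills $T_{mi}$ directly.
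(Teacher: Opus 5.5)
There is a genuine gap. Your outline follows the paper's route: invariance plus near-horizon scaling in Gaussian null coordinates, then the field equations, then the coordinate change to $(\tau,y)$. But the step that actually produces the structure of $\bar F^I$ is derived from the wrong equation.

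In step 3 you obtain $\partial_m(\alpha^I+k^iA^I_i)=0$ from $dF=0$, i.e.\ that the $dv\wedge d\theta_m$ component of $F^I$ equals $r\,k^iF^I_{mi}$. Since you wrote $\bar F^I=d\bar A^I$, the Bianchi identity is empty. Even for a general invariant $F$ it only gives $F^I_{vm}=r\,\partial_mF^I_{vr}$, which ties $F_{vm}$ to the $\theta$-derivative of $F_{vr}$ and never to $F_{mi}$.

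The relation is dynamical: it is the $\theta_m$-component of the Maxwell equation $\nabla_\mu(\mathcal C_{IJ}F^{J\mu\nu})=0$, which is what the paper invokes. Concretely, in the frame $(dv,dr,d\theta_m,\omega_i=d\varphi_i+k^ir\,dv)$ write $F^I=F^I_{vr}\,dv\wedge dr+F^I_{mi}\,d\theta_m\wedge\omega_i+\chi^I_m\,d\theta_m\wedge dv$, with $\chi^I_m\equiv F^I_{mv}-r\,k^iF^I_{mi}=O(r)$. Then $\mathcal C_{IJ}\star F^J$ contains a term proportional to $\mathcal C_{IJ}\chi^J_m\,dv\wedge\star_\perp d\theta_m$. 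Its exterior derivative has a $dr\wedge dv\wedge\star_\perp d\theta_m$ component that nothing else can cancel, because the $F_{vr}$ and $F_{mi}$ pieces only produce components lacking some $\omega_i$. Hence $\chi^I_m=0$ for nondegenerate $\mathcal C_{IJ}$, and this holds with $k^i(\theta)$ and $\mathcal C_{IJ}(\bar\phi(\theta))$ arbitrary. Note also that what becomes constant is $\alpha^I-k^iA^I_i$ (up to your sign convention), not $F^I_{vr}$. The latter generally depends on $\theta$ through $k^iA^I_i$, consistent with $F^I_{\tau y}(\theta_n)$ in the statement.

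This also removes what you flag as the main obstacle, and shows that your steps 2 and 3 must be swapped. Because the Maxwell relation does not need constant $k^i$, it comes first. Then $T_{mi}$ vanishes identically for this structure, given $g_{mi}=0$, $\partial_{\varphi_i}\bar\phi^A=0$ and no $F_{mn}$ components. Meanwhile $T_{mv}\propto\mathcal C_{IJ}F^I_{vr}\chi^J_m/\Gamma=0$, so the Einstein equation yields Condition~\ref{con: 1}. This is exactly the order the paper uses right after the proof.

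Your fallbacks would not do this job. Boost weights only show that $R_{mv}$ and $T_{mv}$ are both $O(r)$, which constrains nothing. A $t$--$\varphi$ reflection symmetry is an additional (orthogonal-transitivity) hypothesis not contained in Assumption~\ref{ass: 1}.

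A smaller point: that $A^I_v|_{r=0}$ is constant on a degenerate horizon is not a regularity statement. It comes from $R_{vv}|_{r=0}=0$ on the Killing horizon, which via the Einstein equation gives $\mathcal C_{IJ}\gamma^{ab}F^I_{va}F^J_{vb}=0$ at $r=0$. The paper simply takes $F_{vm}\propto r$ from the existence of the limit.
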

\begin{proof}
To prove the above lemma, we follow the same argument as in Section 2.4 of \cite{Kunduri:2008rs} to higher dimensions. First, the existence of killing vectors $\tfrac{\partial}{\partial v},\tfrac{\partial}{\partial \varphi_i}$ eliminates $F_{vi}$ and $F_{ij}$. Second, the near-horizon limit \eqref{Eq: near-horizon limit} eliminates $F_{rm}$ and $F_{ri}$ and implies that $F_{vm}\propto r$. Third, the Maxwell equation implies relation between $F_{mv}$ and $F_{mi}$:
\begin{equation}
    F^{I}_{mv}(\theta_n)=F^{I}_{mi}(\theta_n)k^i(\theta_n)r.
\end{equation}
In the Gaussian null coordinates we find the gauge field strength takes the form 
\begin{equation}
\label{Eq: F ansatz Kunduri}
    F^I = F^I_{vr}(\theta_n)\, dv \wedge dr 
    + F^I_{mi}(\theta_n)\, d\theta_m \wedge \left(d\varphi_i + k^i(\theta_n) r\, dv \right),
\end{equation}
where, again, we sum over indices $m$ and $i$. Performing a coordinate transformations, one finds \eqref{Eq: Ansatz F}, see  Appendix \ref{App: Coordinate transformation} for more details.

Given the Killing vectors $\frac{\partial}{\partial v},\frac{\partial}{\partial\varphi_i}$, and the near-horizon limit, a scalar can only depend on $\theta_n$, so we have \eqref{Eq: background scalar}. 
\end{proof}

After proving \textit{Lemma \ref{lem: 2}}, we are ready to prove that \textit{Condition \ref{con: 1}} is not an independent assumption in \textit{Lemma \ref{lem: 1}}. Given the field strength, \eqref{Eq: F ansatz Kunduri}, and the metric, \eqref{Eq: g bar after Gamma}, direct inspection shows that the Einstein equation directly implies \textit{Condition \ref{con: 1}}. This proves that \textit{Condition \ref{con: 1}} applies to all black hole solutions of theory \eqref{Eq: Maxwell action} with symmetry as in \textit{Assumption \ref{ass: 1}}.


\subsection{The small temperature perturbation} 

We will show in Section \ref{Sec: Tensor zero modes} that  a group of tensor zero modes appears for any black hole that admits quadratic extremality. To understand the contribution of such zero modes to the gravitational path integral, one can regularize the corresponding IR divergence by introducing a small temperature perturbation to the background. Motivated by direct inspection of various backgrounds (see, for example,  \cite{Maulik:2024dwq}), here we impose the following empirical assumption about small temperature perturbation of the metric and matter fields:

\begin{assumption}
\label{ass: 2}
    The leading piece of the small temperature perturbation of the metric and matter fields of a near-horizon-extremal solution takes the form of
\begin{equation}
\label{Eq:g1+g2}
    \delta g_{\mu\nu}=T(\delta g_1+\delta g_2)_{\mu\nu},
\end{equation}
where
\begin{equation}
\label{Eq: delta g1}
\begin{aligned}
\delta g_{1\mu\nu}dx^\mu dx^\nu \;=\;&
\Bigl[\,y\bigl(1-y^{2}\bigr)\,
      \delta g_{\tau\tau}(\theta_n)\Bigr]\,
      d\tau^{2}\\[4pt]
&+ 2\Bigl[\,y(1+y)\,\delta g_{\tau i}^{(1)}(\theta_n)
           - i\,k_j\,y\,\delta g_{ij}(\theta_n)\Bigr]\,
      d\tau\,d\varphi_i\\[4pt]
&+ \Bigl[\,
      \frac{y\,\delta g_{yy}^{(1)}(\theta)}{1-y^{2}}
      \Bigr]\,dy^{2}
  \;+\; y\,\delta g_{mm}(\theta)\,d\theta_m^{2}
  \;+\; y\,\delta g_{ij}(\theta)\,d\varphi_id\varphi_j\;,
\end{aligned}
\end{equation}
\begin{equation}
\label{Eq: delta g2}
\begin{aligned}
\delta g_{2\mu\nu}dx^\mu dx^\nu \;=\;&
\Bigl[\,
      2\,i\,k_i\,(1-y)\bigl(
         y\,\delta g_{\tau i}^{(21)}(\theta_n)
         +    \delta g_{\tau i}^{(22)}(\theta_n)\bigr)
      \;-\;
      (1-y)\,\delta g_{yy}^{(2)}(\theta_n)
\Bigr]\,d\tau^{2}\\[4pt]
&+ 2\Bigl[\,
      y\,\delta g_{\tau i}^{(21)}(\theta_n)
      +     \delta g_{\tau i}^{(22)}(\theta_n)
\Bigr]\,d\tau\,d\varphi_i\\[4pt]
&+ \Bigl[\,
      \frac{\delta g_{yy}^{(2)}(\theta_n)}
           {(1-y)(1+y)^{2}}
\Bigr]\,dy^{2}\;,
\end{aligned}
\end{equation}
and
\begin{equation}
\label{Eq: Ansatz:deltaF}
    \delta F= \delta F_{\tau y}(y,\theta_n)\, d\tau\wedge dy\, +\delta F_{\tau m}(y,\theta_n)\,d\tau\wedge d\theta_m +\delta F_{yi}(y,\theta_n)\, dy\wedge d\varphi_i+\delta F_{mi}(y,\theta_n)\, d\theta_m \wedge d\varphi_i,
\end{equation}
\begin{equation}
\label{Eq: Ansatz: dleta phi}
\delta\phi^A=\delta\phi^A(y,\theta)
\end{equation}
where, again, the index $m$ in \eqref{Eq: delta g1} is only summed once. All the undetermined functions in \eqref{Eq: delta g1}-\eqref{Eq: Ansatz: dleta phi} are assumed to be smooth functions in $\theta_n$. The upper indices in the parentheses are just used to differentiate them. Every function corresponds to a mode that drives the black hole away from extremality.
\end{assumption}
\begin{remark}
The above assumption is a structural Ansatz supported by all examples we know in $D\ge4$ dimensions. In four and five dimensions, the above assumption agrees with near-extremal Kerr-Newman-AdS$_4$ and Kerr-AdS$_5$ black holes in \cite{Maulik:2024dwq} and near-cold Kerr-dS$_4$ black hole in Section \ref{Sec: Kerr_dS}. We also checked that, in six dimensions, the perturbed metric above agrees with that of a six-dimensional Myers-Perry black hole. 
\end{remark}

\begin{remark}
We have separated the order $T$ metric into two parts according to different asymptotic behavior of their wavefunction density
\begin{equation}
    ||\delta g^2||_{den}=\sqrt{\bar{g}}\,\delta g^*_{\mu\nu}\delta g^{\mu\nu},
\end{equation}
where $\bar{g}$ is the extremal background. At the $AdS_2$ boundary, $y\to\infty$, $||\delta g_1^2||_{den}\to \mathcal{O}(y^2)$, while $||\delta g_2^2||_{den}$ is finite. As mentioned in Section \ref{Sec:A brief review to the procedure}, the zero modes reflect the asymptotic symmetry of the near-horizon-extremal geometry. Their eigenvalues are lifted by modes in $\delta g_1$. As we will see in Section \ref{Sec: cancellation}, modes in $\delta g_2$ do not contribute to the lifted eigenvalues of the tensor zero modes or contribute only  to the $n=2$ lifted eigenvalue, which will lead to an overall constant coefficient of the partition function. In \cite{Maulik:2024dwq}, it is mentioned that the $\tfrac{3}{2}\log{T}$ correction is independent of the choice of ensemble. We will see in Section \ref{Sec: Kerr_dS}, in the example of Kerr-dS black holes, that the more precise statement is that the ensemble choice can only influence $\delta g_2$.
\end{remark}

For convenience, we give the following definitions before we start to investigate the quantum corrections contributed from the tensor zero modes.

\begin{definition}
\label{def: 2}
    A \textit{black hole one-parameter family} is defined as a map $\mathcal{E}$
\begin{equation}
    \mathcal{E}:[0,\infty)\to \mathcal{S},\quad T\to\mathcal{E}(T),
\end{equation}
where $\mathcal{S}$ is the space of black hole solutions of the theory given by \eqref{Eq: Maxwell action}, and $T$ is the temperature of the black hole. 
\end{definition}

\begin{definition}
    \label{def: 3}
    We define a \textit{black hole one-parameter family} of \textit{Type A} as a \textit{black hole one-parameter family} in  the theory defined by the Lagrangian given in \eqref{Eq: Maxwell action} that satisfies: (1) The near-horizon limit of $\mathcal{E}(0)$ takes the form of \eqref{Eq: BG ansatz} and \eqref{Eq: Ansatz F}; (2)\textit{Assumption \ref{ass: 2}}, {\it  i.e.}, as $T\to0$, the leading order of $\mathcal{E}(T)-\mathcal{E}(0)$ is given by \eqref{Eq:g1+g2}-\eqref{Eq: Ansatz: dleta phi} in the near-horizon region.
\end{definition}

\section{Tensor zero modes}
\label{Sec: Tensor zero modes}
\subsection{Existence of tensor zero modes}
Having introduced the precise definitions pertaining near-horizon-extremal configurations, we are now prepared to describe the tensor zero modes. These zero modes are generated by a set of large gauge transformations of the metric field. They reflect the asymptotic symmetry of \eqref{Eq: BG ansatz 4d}, and are annihilated by the Lichnerowicz operator \eqref{Eq: Linchnerowichz operator}. We give the following lemma for the explicit structure of the tensor zero modes:

\begin{lemma}
    \label{lem: 3}
    For a near-extremal black hole one-parameter family of \textit{Type A}, the Lichnerowicz  operator  \eqref{Eq: Linchnerowichz operator} admits a set of tensor zero modes at $T=0$ of the following form

\begin{equation}
\label{Eq: Tensor Zero Modes}
    h^{(n)}_{\mu\nu}dx^\mu dx^\nu=C_n\,g_1(\theta_m)e^{i n\tau}\left(\frac{y-1}{y+1}\right)^{\frac{|n|}{2}}\left(-d\tau^2+\frac{dy^2}{(y^2-1)^2}+2i\frac{n}{|n|}\frac{d\tau dy}{y^2-1}\right),
\end{equation}
where $|n|\ge2$ is an integer, $C_n$ are normalization constants and $g_1(\theta)$ is the same function as in \eqref{Eq: BG ansatz 4d}.
\end{lemma}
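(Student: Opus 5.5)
The plan is to realize the modes \eqref{Eq: Tensor Zero Modes} as \emph{large diffeomorphisms} of the extremal near-horizon background of \textit{Lemma \ref{lem: 1}}. The strategy is to reduce everything to the known Schwarzian modes of pure Euclidean AdS$_2$, showing that the warping $g_1(\theta_n)$, the fibration over the angles and the matter sector of \textit{Lemma \ref{lem: 2}} do not obstruct them.

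\textbf{Step 1: constructing the generating vector field.} Write the unwarped AdS$_2$ metric as $\hat g=(y^2-1)d\tau^2+dy^2/(y^2-1)$. Its measure is $\sqrt{\hat g}=1$, so its volume form is $d\tau\wedge dy$. On this AdS$_2$ take the standard vector fields $\hat\xi^{(n)}=(\xi^\tau(\tau,y),\xi^y(\tau,y))\propto e^{in\tau}$ that generate boundary reparametrizations. They are chosen so that $\hat h^{(n)}=\mathcal{L}_{\hat\xi}\hat g$ is traceless and normalizable, and they are divergence-free, $\hat\nabla_a\hat\xi^a=0$. One checks that $\hat h^{(n)}$ reproduces the bracket in \eqref{Eq: Tensor Zero Modes} with the factor $\left(\frac{y-1}{y+1}\right)^{|n|/2}$. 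For $n=0,\pm1$ the field $\hat\xi$ is an $SL(2,\mathbb R)$ Killing vector, so $\hat h=0$; this is why the lemma requires $|n|\ge2$.

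Lift $\hat\xi$ to $D$ dimensions as $\xi=\xi^\tau\partial_\tau+\xi^y\partial_y+\sum_i\xi^{\varphi_i}\partial_{\varphi_i}$, with no $\theta$-components, and choose $\xi^{\varphi_i}$ so that the fibration one-forms $e_i=d\varphi_i+ik_i(y-1)d\tau$ are invariant. Using $de_i=ik_i\,dy\wedge d\tau$, we get $\mathcal{L}_\xi e_i=i_\xi de_i+d(i_\xi e_i)$. Since $\hat\xi$ is divergence-free, $i_{\hat\xi}(d\tau\wedge dy)$ is closed, hence locally exact, $i_{\hat\xi}(d\tau\wedge dy)=d\psi$ for some stream function $\psi(\tau,y)$. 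Setting $\xi^{\varphi_i}=-ik_i(y-1)\xi^\tau+ik_i\,\psi$ then makes $\mathcal{L}_\xi e_i=0$; the sign of the $\psi$ term is fixed by the conventions. Because $\xi$ has no $\theta$-component, $\xi(g_1)=\xi(g_{ij})=\xi(g_{mm})=0$. Hence
\begin{equation}
\mathcal{L}_\xi\bar g=g_1(\theta_n)\,\mathcal{L}_{\hat\xi}\hat g=g_1(\theta_n)\,\hat h^{(n)},
\end{equation}
which is exactly \eqref{Eq: Tensor Zero Modes}.

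\textbf{Step 2: the matter fields and the gauge conditions.} The scalars $\bar\phi^A(\theta_n)$ are inert, so $\mathcal{L}_\xi\bar\phi^A=0$. For the field strength \eqref{Eq: Ansatz F}, rewrite it as $F^I_{\tau y}\,d\tau\wedge dy+F^I_{mi}\,d\theta_m\wedge e_i$. The first term is invariant because $\mathcal{L}_\xi(d\tau\wedge dy)=d\big(i_\xi(d\tau\wedge dy)\big)=0$ by the divergence-free property. The second term is invariant by $\mathcal{L}_\xi e_i=0$ together with $\xi(\theta)=0$. Thus $h^{(n)}$ is a pure metric fluctuation generated by a diffeomorphism that leaves all matter fields fixed.

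Next check the gauge \eqref{Eq: gauge}. Tracelessness is immediate, since the inverse metric on the $(\tau,y)$ block is $g_1^{-1}\hat g^{-1}$ up to fibre-mixing terms, and $h$ has only $(\tau,y)$ legs. For transversality, write $\bar\nabla^\mu h_{\mu\nu}=\frac{1}{\sqrt{\bar g}}\partial_\mu(\sqrt{\bar g}\,h^\mu{}_\nu)-\tfrac12 h^{\alpha\beta}\partial_\nu\bar g_{\alpha\beta}$, and track how $g_1$ cancels between $h$, the raised index and $\sqrt{\bar g}$. The $\theta$-derivative terms drop out because $h$ has no $\theta$ legs. One must also show that the $\tau\varphi$ cross terms of the inverse metric only produce $\partial_{\varphi}$ derivatives, which vanish. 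This reduces the condition to $\hat\nabla^a\hat h_{ab}=0$ on AdS$_2$, which holds for the Schwarzian modes.

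\textbf{Step 3: annihilation by $\Delta_L$.} The background solves the field equations $E_{\mu\nu}[\bar g,\bar F,\bar\phi]=0$, and these are diffeomorphism covariant. Therefore the linearized operator obeys $\delta E[\mathcal{L}_\xi\bar g,\mathcal{L}_\xi\bar F,\mathcal{L}_\xi\bar\phi]=\mathcal{L}_\xi E=0$. By Step 2 the matter variations vanish, so the purely gravitational part $\Delta_{EH}-V\Delta_1-\Delta_\phi-2\Delta_F$ of \eqref{Eq: Linchnerowichz operator} annihilates $h^{(n)}$, up to the contribution of the gauge-fixing term \eqref{Eq: gauge fixing}. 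That contribution is built from $\bar\nabla_\alpha h^{\alpha\mu}-\frac12\bar\nabla^\mu h$, which vanishes by Step 2. So $\Delta_L h^{(n)}=0$ at $T=0$.

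Finally, normalizability in $\int\sqrt{\bar g}\,h^*h$ holds because the $g_1$ factors cancel and the AdS$_2$ norm is finite for $|n|\ge2$, while $\xi$ itself grows at $y\to\infty$. This makes the modes physical rather than pure gauge.

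I expect the main obstacle to be Step 2's transversality computation in the rotating case. The off-diagonal $g^{\tau\varphi}$ components and the warping could, in principle, spoil the reduction to the AdS$_2$ divergence. I would first verify this explicitly in the four-dimensional form \eqref{Eq: BG ansatz 4d} before generalizing.
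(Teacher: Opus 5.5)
Your proposal is correct and takes essentially the same route as the paper. Both show that $h^{(n)}$ is traceless and transverse, and both realize it as $\mathcal{L}_\xi\bar g$ for a large diffeomorphism whose $\varphi_i$-components compensate the fibration; your condition $\mathcal{L}_\xi e_i=0$ is exactly the paper's requirement that $\mathcal{L}_\xi\bar g$ have legs only along AdS$_2$, which is how it fixes $f_3^{(j)}$. Your explicit check that $\mathcal{L}_\xi\bar F^I=\mathcal{L}_\xi\bar\phi^A=0$, combined with covariance of the linearized field equations, usefully justifies why the matter-dependent pieces of $\Delta_L$ also annihilate $h^{(n)}$, which the paper leaves implicit by invoking only the Einstein--Hilbert action; in the transversality step, the $\nu=\theta$ component and the fibre terms in $h^{\alpha\beta}\partial_\nu\bar g_{\alpha\beta}$ vanish because $h$ is traceless and $h^{\alpha\beta}(e_i)_\beta=0$, not merely because $h$ has no $\theta$ legs.
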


\begin{proof}
    We prove that $h^{(n)}_{\mu\nu}$ is indeed a zero mode in two steps. We first prove that $h^{(n)}_{\mu\nu}$ sits in the ``traceless and divergence-free" gauge \eqref{Eq: gauge}, which annihilates the gauge-fixing term \eqref{Eq: gauge fixing}. Then we prove that $h^{(n)}_{\mu\nu}$ can be generated by a (large) diffeomorphism, which annihilates the Einstein-Hilbert parts. By these two steps, $h^{(n)}_{\mu\nu}$ will automatically be annihilated by the Lichnerowicz operator. The first step follows from direct evaluation 
\begin{equation}
\label{Eq: metric gauge condition}
    \bar{h}^\mu_{\,\,\mu}= h_{\mu\nu}\bar{g}^{\mu\nu}=0,\quad\bar{\nabla}^\mu h_{\mu\nu}=0,
\end{equation}
where $\bar{\nabla}$ is the covariant derivative operator associated with the extremal metric $\bar{g}_{\mu\nu}$. For the second step, we need to find a vector field that generates \eqref{Eq: Tensor Zero Modes}. If the black hole is static, the tensor zero modes are generated by a vector field that has components only on $AdS_2$. In this case, one can convert the above condition \eqref{Eq: metric gauge condition} to a second order PDE of a scalar field. More details can be found in Section 3.2 and Equation (4.9) of \cite{Blacker:2025zca}. For rotating black holes, as the background metric has $\tau\varphi_i$ components, we need to include the $\varphi_i$ components of the vector field:
\begin{equation}
\label{eq: Diffeo Vector}
    \xi^{(n)\mu}\frac{\partial}{\partial x^\mu}=e^{in\tau}\left(f_2(y)\frac{\partial}{\partial\tau}+f_1(y)\frac{\partial}{\partial y}+f_3^{(j)}(y)\frac{\partial}{\partial\varphi_j}\right),
\end{equation}
Assuming the same functions $f_1,f_2$ as the static case and demanding $h_{\mu\nu}=\mathcal{L}_{\xi}\bar{g}_{\mu\nu}$ only have non-zero components on $AdS_2$ gives
\begin{align}
    f_1&=-in(n+y)\left(\frac{y-1}{y+1}\right)^{\frac{|n|}{2}},\\
    f_2&=\frac{if_1'(y)}{n},\\
    f_3^{(j)}&=\frac{k_j\left(y-1\right)f_1'(y)-k_jf_1(y)}{|n|},
\end{align}
where $k_j$ are the constants in \eqref{Eq: BG ansatz}. The modes with $n=0,\pm1$ are excluded because they correspond to the isometries of the near-horizon-extremal geometry $\bar{g}$ and the Lie derivative vanishes. One can verify that this is indeed the vector field we are looking for:
\begin{equation}
h^{(n)}_{\mu\nu}=\mathcal{L}_{\xi^{(n)}}\bar{g}_{\mu\nu}.
\end{equation}
Now we have shown that the set of zero modes given in \eqref{Eq: Tensor Zero Modes} is a diffeomorphism which has no contribution to the Einstein-Hilbert action \eqref{Eq: Maxwell action}. At the same time, the zero modes \eqref{Eq: Tensor Zero Modes} remain in the ``traceless and divergence-free" gauge, so they do not change the action through the gauge-fixing term \eqref{Eq: gauge fixing}. This finishes the proof that they are zero modes of the Lichnerowicz operator.  
\end{proof}

\begin{remark}
    Note that the zero modes \eqref{Eq: Tensor Zero Modes} are normalizable while the generating vector field is non-normalizable. Indeed, one can verify that the norms are 

\begin{eqnarray}
    ||h_{\mu\nu}||^2&=& \int d^Dx\,\sqrt{-g}\,h^*_{\mu\nu}h^{\mu\nu}  <\infty, \nonumber \\
    ||\xi_\mu||^2&=& \int d^Dx\,\sqrt{-g}\,\xi^{*}_{\mu}\xi^{\mu}\to \infty.
\end{eqnarray}
This is precisely the property we expect for large diffeomorphisms. We note that our construction is universal, depends only $g_1(\theta_m)$. It recovers the spherically symmetric examples \cite{Sen:2012kpz,Iliesiu:2022onk,Charles:2015eha} and, more importantly, the  rotating cases \cite{Kapec:2023ruw,Rakic:2023vhv,Maulik:2025phe,Blacker:2025zca}
\end{remark}

\subsection{Relation to the boundary curve reparameterization}
We have thus far treated the zero modes from a higher-dimensional ($D\ge4$)  perspective, it is quite instructive to connect with the two-dimensional point of view which constitutes the original conceptual source.  This subsection explicitly shows  how the diffeomorphisms in \eqref{eq: Diffeo Vector} relate to the reparameterization modes of the boundary curve of $AdS_2$ in the framework of JT gravity.
Let us first review the boundary curve reparameterization modes in JT gravity. Following \cite{Mertens:2022irh}, the reparameterization modes are defined as follows. Start from Poincar\'e coordinates of Euclidean $AdS_2$
\begin{equation}
    ds^2=\frac{dT^2+dZ^2}{Z^2}.
\end{equation}
As a boundary condition, we fix the induced metric $h_{JT}$ on the boundary curve to satisfy $\sqrt{h_{JT}}=\tfrac{1}{\epsilon}$, where $\epsilon$ is infinitesimal. We also fix the dilaton field asymptotics to be $\Phi_{JT}|_\partial=\tfrac{a}{2\epsilon}$, where $a$ is a constant. Under these conditions, the boundary curve can be written as $(T,Z)=(F(\tilde{\tau}),\epsilon F'(\tilde{\tau}))$.  The JT action  reduces to its Gibbons-Hawking extrinsic curvature contribution which becomes the standard Schwarzian action
\begin{equation}
\label{Eq: Schwarzian action}
    I_{JT}=-C\int d\tilde{\tau}\{F,\tilde{\tau}\},\qquad C=\frac{a}{16\pi G_N},
\end{equation}
where $\{\dots\}$ denotes the Schwarzian derivative,
\begin{equation}
    \{F,\tilde{\tau}\}\equiv \frac{F'''}{F'}-\frac{3}{2}\left(\frac{F''}{F'}\right)^2
\end{equation}
The saddle point that solves the equation of motion derived from the above Schwarzian action \eqref{Eq: Schwarzian action} is given by 
\begin{equation}
    F(\tilde{\tau})=\tan(\frac{\pi}{\beta}\tilde{\tau}), \qquad Z(\tilde{\tau})=\epsilon F'(\tilde{\tau})=\epsilon\frac{\pi}{\beta}\sec^2(\frac{\pi}{\beta}\tilde{\tau}).
\end{equation}
As we are mostly interested in the way the $AdS_2$ geometry appears in the near-horizon region of near-extremal black holes, it is more convenient to use another coordinate system, the so-called black hole patch,
\begin{equation}
\label{Eq: black hole patch}
    ds^2=\frac{4\pi^2}{\beta^2}\sinh^2{\rho}\,d\tilde{\tau}^2+d\rho^2
\end{equation}
The coordinate transformation from the Poincar\'e patch to the black hole patch is
\begin{equation}
    Z(\tau,\rho)=
\frac{1}{\cos(\kappa\tau)\,\sinh\rho+\cosh\rho},\quad    T(\tau,\rho)=
\frac{\sin(\kappa\tau)\,\sinh\rho}{\cos(\kappa\tau)\,\sinh\rho+\cosh\rho},
\end{equation}
with $\kappa=\frac{2\pi}{\beta}$. The boundary curve now lives at large $\rho$. At $\rho\to\infty$ limit
\begin{equation}
    Z(\tau,\rho)
=\frac{e^{-\rho}}{\cos^2\!\big(\frac{\kappa\tau}{2}\big)},\quad T(\tau,\rho)
=
\tan\!\Big(\frac{\kappa\tau}{2}\Big).
\end{equation}
In the black hole patch, the saddle is mapped as $(\tilde{\tau},\rho)=(\tilde{\tau},-\ln{(\epsilon\tfrac{\pi}{\beta})})$. Fluctuations around this saddle can be described by a function $f\in \mathrm{Diff} \,S^1$ via $(\tilde{\tau},\rho)=(\tilde{\tau}+f(\tilde{\tau}),-\ln{(\epsilon\tfrac{\pi}{\beta})-f'(\tilde{\tau})})$. Given the periodicity, one can expand this function, $f(\tilde{\tau})$, in terms of its Fourier modes to arrive at the  so-called boundary curve reparameterization modes, $\epsilon_n$,
\begin{equation}
\label{Eq: Boudary Reparameterization}
    f(\tilde{\tau})=\frac{\beta}{2\pi}\sum_{n\ge2}e^{-\frac{2\pi}{\beta}in\tilde{\tau}}\epsilon_n+\mathrm{h.c.},
\end{equation}
where we have excluded the unphysical $n=-1,0,1$ modes, which reflect the isometries of $AdS_2$.
To compare the above reparameterization modes with the diffeomorphism \eqref{eq: Diffeo Vector}, we first perform a coordinate transformation
\begin{equation}
    \tau\to \frac{2\pi}{\beta}\tilde{\tau},\quad y\to\cosh{\rho}.
\end{equation}
The $AdS_2 $ part of the near-horizon extremal geometry \eqref{Eq: BG ansatz} becomes the same as the black hole patch \eqref{Eq: black hole patch}. Under this coordinate transformation and $\rho \to \infty$, the vector field \eqref{eq: Diffeo Vector} becomes
\begin{equation}
     \xi^{(-n)\mu}\frac{\partial}{\partial x^\mu}=e^{-\frac{2\pi}{\beta}in\tilde{\tau}}\left(\frac{\beta}{2\pi}\frac{\partial}{\partial\tilde{\tau}}+in\frac{\partial}{\partial \rho}+ik_i\frac{\partial}{\partial\varphi_i}\right).
\end{equation}
As we are interested in comparing it with the $AdS_2$ reparameterization modes, we project it to the $(\tilde{\tau},\rho)$-plane on which its action on the  boundary curve is
\begin{equation}
    (\tilde{\tau},\rho)=(\tilde{\tau}+\xi^{(-n)\tilde{\tau}}\epsilon_n,-\ln{(\epsilon\frac{\pi}{\beta})}+\xi^{(-n)\rho}\epsilon_n).
\end{equation}
This is the same as the boundary curve reparameterization modes \eqref{Eq: Boudary Reparameterization}. We now see a one-to-one correspondence between $\xi^{(-n)}$ and $\epsilon_n$. One can redefine the index $n$ to have a cleaner dictionary between these two languages of the Schwarzian modes:
\begin{equation}
    \xi^{(n)}\leftrightarrow \epsilon_n.
\end{equation}

The above map illustrates the physics of quantum JT gravity without recurring to dimensional reduction of the higher dimensional theory to two dimensions.

\section{The cancellation and lifted eigenvalues}
\label{Sec: cancellation}
In this section, we prove that, after being regularized by the small temperature perturbation \eqref{Eq:g1+g2} and \eqref{Eq: Ansatz:deltaF}, the lifted eigenvalues of the tensor zero modes \eqref{Eq: Tensor Zero Modes} contribute a $\frac{3}{2}\log{T}$ correction to the thermodynamic entropy. We first investigate the role of the matter sector in this calculation.

\subsection{The cancellation}
Instead of substituting the small temperature perturbations \eqref{Eq:g1+g2} and \eqref{Eq: Ansatz:deltaF} into the full Lichnerowicz operator \eqref{Eq: Linchnerowichz operator}, we first simplify the calculation by proving an empirical observation made in \cite{Maulik:2024dwq}: only two terms in the Lichnerowicz operator contribute to the lifted eigenvalues.

We start by first grouping all terms in the Lichnerowicz operator \eqref{Eq: Linchnerowichz operator} as
\begin{equation}
\begin{alignedat}{2}
{\rm Term\, 1a:}&\, h_{\alpha\beta}^*\ \delta(\tfrac{1}{2}g^{\alpha\mu}g^{\beta\nu}\square)h_{\mu\nu}
&\hspace{0.8em}{\rm Term\, 1b:}&\, h_{\alpha\beta}^*\ \delta(-\tfrac{1}{4}g^{\alpha\beta}g^{\mu\nu}\square )h_{\mu\nu} \\
{\rm Term\, 2:}&\, h_{\alpha\beta}^*\ \delta(R^{\alpha\mu\beta\nu})h_{\mu\nu} && \\
{\rm Term\, 3a:}&\, h_{\alpha\beta}^*\ \delta(R^{\alpha\mu}g^{\beta\nu})h_{\mu\nu}
&\hspace{0.8em}{\rm Term\, 3b:}&\, h_{\alpha\beta}^*\ \delta(-R^{\alpha\beta}g^{\mu\nu})h_{\mu\nu} \\
{\rm Term\, 4a:}&\, h_{\alpha\beta}^*\ \delta(-\tfrac{1}{2}R g^{\alpha \mu}g^{\beta\nu} )h_{\mu\nu}
&\hspace{0.8em}{\rm Term\, 4b:}&\, h_{\alpha\beta}^*\ \delta(\tfrac{1}{4}R g^{\alpha\beta}g^{\mu\nu})h_{\mu\nu} \\
{\rm Term\, 5a:}&\, h_{\alpha\beta}^*\ \delta(\tfrac{1}{2}V\, g^{\alpha \mu}g^{\beta\nu})h_{\mu\nu}
&\hspace{0.8em}{\rm Term\, 5b:}&\, h_{\alpha\beta}^*\ \delta(-\tfrac{1}{4}V\, g^{\alpha \beta}g^{\mu\nu})h_{\mu\nu} \\
{\rm Term\, 6a:}&\, h_{\alpha\beta}^*\ \delta(\tfrac{1}{2}\,\mathcal{C}_{IJ} F^I_{\rho\sigma}F^{J\rho\sigma} g^{\alpha \mu} g^{\beta \nu})h_{\mu\nu}
&\hspace{0.8em}{\rm Term\, 6b:}&\, h_{\alpha\beta}^*\ \delta(-\tfrac{1}{4} \,\mathcal{C}_{IJ}F^I_{\rho\sigma}F^{J\rho\sigma}g^{\alpha \beta} g^{\mu \nu})h_{\mu\nu} \\
{\rm Term\, 7:}&\, h_{\alpha\beta}^*\ \delta(-2\,\mathcal{C}_{IJ}F^{I\alpha \mu} F^{J\beta \nu})h_{\mu\nu} && \\
{\rm Term\, 8a:}&\, h_{\alpha\beta}^*\ \delta(-4\,\mathcal{C}_{IJ} F^{I\alpha \gamma} F^{J\mu}{}_{ \gamma} g^{\beta \nu})h_{\mu\nu}
&\hspace{0.8em}{\rm Term\, 8b:}&\, h_{\alpha\beta}^*\ \delta(2\,\mathcal{C}_{IJ}F^{I\alpha \gamma} F^{J\beta}{}_{\gamma} g^{\mu \nu})h_{\mu\nu} \\
{\rm Term\, 9a:}&\, h_{\alpha\beta}^*\ \delta(\tfrac{1}{4}\,f_{AB}\partial_\rho\phi^A\partial^{\rho}\phi^B\, g^{\alpha\mu}g^{\beta\nu})h_{\mu\nu}
&\hspace{0.8em}{\rm Term\, 9b:}&\, h_{\alpha\beta}^*\ \delta(-\tfrac{f_{AB}}{8}\partial_\rho\phi^A\partial^{\rho}\phi^B\, g^{\alpha\beta}g^{\mu\nu})h_{\mu\nu} \\
{\rm Term\, 10a:}&\, h_{\alpha\beta}^*\ \delta(-\,f_{AB}\partial^\alpha\phi^A\partial^{\mu}\phi^B\, g^{\beta\nu})h_{\mu\nu}
&\hspace{0.8em}{\rm Term\, 10b:}&\, h_{\alpha\beta}^*\ \delta(\tfrac{1}{2}\,f_{AB}\partial^\alpha\phi^A\partial^{\beta}\phi^B\, g^{\mu\nu})h_{\mu\nu}
\end{alignedat}
\label{eq:cancelling terms KN-AdS4}
\end{equation}
where $\delta$ indicates that the above expressions are of order $T$. As $h_{\mu\nu}$ are zero modes, the $O(1)$ result is zero. There were two main findings in \cite{Maulik:2024dwq} regarding the structure of the terms:
\begin{itemize}
    \item In \eqref{eq:cancelling terms KN-AdS4}, all terms with the trace of the zero modes, {\it i.e.}, the ``b" terms, vanish by themselves.
    \item ${\rm Term \, 3a}+{\rm Term\, 4a}+{\rm Term \, 5a}+{\rm Term \, 6a}+{\rm Term \, 7}+{\rm Term \, 8a}+{\rm Term \, 9a}+{\rm Term \, 10a}=0.$ 
\end{itemize}
Therefore, only two terms contribute non-trivially. We summarize this observation via the following lemma
\begin{lemma}
\label{lem: 4}
For a near-extremal black hole one-parameter family of \textit{type A}, the action of the Lichnerowicz operator \eqref{Eq: Linchnerowichz operator} on the tensor zero modes given by \textit{Lemma \ref{lem: 3}}, reduces to 
\be
\label{eq:Cancellation}
h_{\alpha\beta}^*\Delta^{\alpha \beta, \mu \nu}_{L}h_{\mu\nu}={\rm Term \, 1a}+{\rm Term\, 2}= h_{\alpha\beta}^*\ \delta(\frac{1}{2}g^{\alpha\mu}g^{\beta\nu}\square+R^{\alpha\mu\beta\nu})h_{\mu\nu}.
\ee
\end{lemma}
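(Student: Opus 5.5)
The plan is to avoid evaluating each term of \eqref{eq:cancelling terms KN-AdS4} separately. Instead I would use two structural facts: the tensor zero modes $h^{(n)}$ of \textit{Lemma \ref{lem: 3}} are supported only on the $AdS_2$ block, and the background field equations hold along the whole one-parameter family $\mathcal{E}(T)$. Throughout, $\delta$ means the $\mathcal{O}(T)$ part of an expression evaluated on $\bar g+T\delta g$, $\bar F+T\delta F$, $\bar\phi+T\delta\phi$ with \textit{Assumption \ref{ass: 2}}, while $h^{(n)}$ is held fixed. By the product rule, every $\delta(\cdots)$ is a sum of terms in which exactly one background object ($g^{\mu\nu}$, $R$, $F$, $\partial\phi$, $V$, $\mathcal{C}_{IJ}$, $f_{AB}$) is replaced by its first-order variation.

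\textbf{The ``b'' terms.} Each ``b'' term has the schematic form $h^*_{\alpha\beta}\,\delta\!\left(X^{\alpha\beta}g^{\mu\nu}\right)h_{\mu\nu}$, with $X\in\{\square\text{-term},\,R^{\alpha\beta},\,V g^{\alpha\beta},\,F^2 g^{\alpha\beta},\,F^{\alpha\gamma}F^{\beta}{}_{\gamma},\,\partial^\alpha\phi\,\partial^\beta\phi,\dots\}$. Expanding, the piece $\bar X^{\alpha\beta}h^*_{\alpha\beta}\,\bar g^{\mu\nu}h_{\mu\nu}\cdot\delta(\text{scalar prefactor})$ and the piece $\delta X^{\alpha\beta}h^*_{\alpha\beta}\,\bar g^{\mu\nu}h_{\mu\nu}$ both vanish by tracelessness \eqref{Eq: metric gauge condition}. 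The only remaining piece is $\bigl(\bar X^{\alpha\beta}h^*_{\alpha\beta}\bigr)\bigl(\delta g^{\mu\nu}h_{\mu\nu}\bigr)$. For this I would show $\bar X^{\alpha\beta}h^*_{\alpha\beta}=0$ using \textit{Lemma \ref{lem: 1}} and \textit{Lemma \ref{lem: 2}}:
\begin{itemize}
\item the $(\tau,y)$ block of $\bar R^{\alpha\beta}$ and of $\bar F^{I\alpha\gamma}\bar F^{J\beta}{}_{\gamma}$ is proportional to the $AdS_2$ metric $g_1\,g^{AdS_2}$ (from the $SL(2,\mathbb{R})$ invariance of the near-horizon data, which for the $\tau\varphi_i$ cross terms I would verify directly in the coordinates of \eqref{Eq: BG ansatz});
\item $\partial\bar\phi$ has only $\theta_n$ components by \eqref{Eq: background scalar}.
\end{itemize}
Since $h^{(n)}$ is traceless on the $AdS_2$ block, every such contraction vanishes. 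For Term 1b, both traces are zero at leading order, so only a product of a zero-order trace with something vanishes.

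\textbf{The ``a'' terms.} I would observe that Terms 3a, 4a, 5a, 6a, 8a, 9a, 10a assemble, up to the overall normalization in \eqref{Eq: Linchnerowichz operator}, into $h^*_{\alpha\beta}\,\delta\!\left(\mathcal{E}^{\alpha\mu}g^{\beta\nu}\right)h_{\mu\nu}$. Here
\[
\mathcal{E}_{\mu\nu}=R_{\mu\nu}-\tfrac12 R g_{\mu\nu}+\tfrac12 V g_{\mu\nu}-\tfrac12 f_{AB}\bigl(\partial_\mu\phi^A\partial_\nu\phi^B-\tfrac12 g_{\mu\nu}(\partial\phi)^2\bigr)-2\,\mathcal{C}_{IJ}\bigl(2F^I_{\mu\gamma}F^{J}{}_{\nu}{}^{\gamma}-\tfrac12 g_{\mu\nu}F^I\!\cdot\! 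F^J\bigr)
\]
is the Einstein equation tensor. I would check the coefficients against the list. Because $\mathcal{E}(T)$ is a solution for every $T$ (\textit{Definition \ref{def: 2}}), $\mathcal{E}_{\mu\nu}=0$ identically on the family. Hence both $\bar{\mathcal{E}}$ and $\delta\mathcal{E}$ vanish, and so does the whole combination.

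\textbf{Term 7 (expected main obstacle).} Term 7, $-2\,\delta\!\left(\mathcal{C}_{IJ}F^{I\alpha\mu}F^{J\beta\nu}\right)h^*_{\alpha\beta}h_{\mu\nu}$, is not of Einstein-tensor form, so it must vanish for a different reason. Since $h$ lives on the $AdS_2$ block, only the $(\tau,y)$ components of the raised $F^{\alpha\mu}$ contribute. I would compute $\delta(F^{I\tau y})$ from \eqref{Eq: Ansatz:deltaF} together with the variations of the inverse metric from \eqref{Eq: delta g1}--\eqref{Eq: delta g2}. Using the two-dimensional identity $\epsilon^{\alpha\mu}\epsilon^{\beta\nu}=g^{\alpha\beta}g^{\mu\nu}-g^{\alpha\nu}g^{\beta\mu}$, I would then reduce the contraction to $\bigl(\text{trace}\bigr)^2-h^*\!\cdot h$ weighted by $\delta\!\bigl(\mathcal{C}_{IJ}F^I_{\tau y}F^{J}_{\tau y}/\det g_{AdS_2}\bigr)$.

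I would then argue that this weight combines with the $F_{\tau y}$-dependent part of Term 8a, which appears with the same structure. To do so I would split Term 8a into its $AdS_2$-electric part and the rest before invoking the Einstein equation in the previous step. If this regrouping fails, the fallback is a direct evaluation of the leftover using the explicit $y$-dependence of \eqref{Eq: Tensor Zero Modes} and \eqref{Eq: delta g1}, integrated over $\tau$ and $y$; the hope is that the residual integrand is a total $y$-derivative that vanishes at $y=1$ and at the normalizable boundary. This is the step where I expect the real work to lie.

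Once all of the above cancel, only Terms 1a and 2 survive, which is \eqref{eq:Cancellation}.
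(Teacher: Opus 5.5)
Your treatment of the ``b'' terms is sound. The symmetry argument that the $AdS_2$ block of $\bar R^{ab}$ and of $\bar F^{a}{}_{c}\bar F^{bc}$ is pure trace is a clean substitute for the paper's direct evaluation and its use of the Einstein equation for Term 8b.

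\textbf{The ``a''-term step fails as written.} Your $\mathcal{E}_{\mu\nu}$ is not the field-equation tensor of \eqref{Eq: Maxwell action}. By \eqref{Eq: Einstein equation}, the Maxwell part is $-2\mathcal{C}_{IJ}\bigl(F^I_{\mu\gamma}F^{J}{}_{\nu}{}^{\gamma}-\tfrac14 g_{\mu\nu}F^I\cdot F^J\bigr)$, which is half of what you wrote. So your $\mathcal{E}$ does not vanish on the family whenever $\bar F\neq 0$. Separately, Terms 3a, 4a, 5a, 6a, 8a, 9a, 10a do not assemble into your $\mathcal{E}$: Term 6a carries $\tfrac12$ rather than $1$, and Term 10a carries $-1$ rather than $-\tfrac12$. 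The coefficient check you deferred therefore fails. The correct bookkeeping, which is what \eqref{Eq: Modified EE} and \eqref{Eq: term a s} encode, is that the sum of Terms 3a, 4a, 5a, 6a, 7, 8a, 9a, 10a equals
\[
\begin{aligned}
&h^*_{\alpha\beta}\,\delta\bigl(\mathcal{E}^{\alpha\mu}g^{\beta\nu}\bigr)h_{\mu\nu}
-2\,h^*_{\alpha\beta}\,\delta\Bigl(\mathcal{C}_{IJ}\bigl(F^{I\alpha\mu}F^{J\beta\nu}+F^{I\alpha\gamma}F^{J\mu}{}_{\gamma}\,g^{\beta\nu}\bigr)\Bigr)h_{\mu\nu}\\
&\quad-\tfrac12\,h^*_{\alpha\beta}\,\delta\bigl(f_{AB}\,\partial^{\alpha}\phi^{A}\partial^{\mu}\phi^{B}\,g^{\beta\nu}\bigr)h_{\mu\nu},
\end{aligned}
\]
where $\mathcal{E}$ is now the true field-equation tensor. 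Only half of Term 8a and half of Term 10a are absorbed by the equations of motion.

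\textbf{Consequence for Term 7.} Term 7 must be paired with exactly the remaining half of Term 8a. Your proposed pairing with the ``$AdS_2$-electric part'' of Term 8a amounts to pairing it with all of Term 8a, because after contraction with $h$ Term 8a only sees the $e^0\wedge e^1$ component. That grouping leaves $-2\,\delta\bigl(\mathcal{C}_{IJ}f_1^If_1^J\,h^*\!\cdot h\bigr)$, which is generically nonzero.

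\textbf{How to close it.} With the half split, your own $\epsilon\epsilon$ identity finishes the argument pointwise. Take $h$ confined to the $e^0,e^1$ block of the frame adapted to \eqref{Eq: BG ansatz}, and write $\bar F^I=f_1^I\,e^0\wedge e^1+(\theta\varphi\text{ part})$. Then
\[
h^*_{\alpha\beta}\mathcal{C}_{IJ}\bar F^{I\alpha\mu}\bar F^{J\beta\nu}h_{\mu\nu}=-\mathcal{C}_{IJ}f_1^If_1^J\,h^*\!\cdot h,
\qquad
h^*_{\alpha\beta}\mathcal{C}_{IJ}\bar F^{I\alpha\gamma}\bar F^{J\mu}{}_{\gamma}\bar g^{\beta\nu}h_{\mu\nu}=+\mathcal{C}_{IJ}f_1^If_1^J\,h^*\!\cdot h,
\]
so the $F$-bracket vanishes at zeroth order. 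At order $T$, only three quantities can enter: $\delta\mathcal{C}_{IJ}$, $\delta F^{01}$, and the trace of the $AdS_2$ block of $\delta g^{ab}$. Each cancels between the two pieces by the same identity.

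The scalar remainder also vanishes. At first order one of the two factors $\partial^\alpha\bar\phi$ stays at its background value, which points along $\theta_m$ and so contracts to zero against $h$.

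With these steps the cancellation holds pointwise, as \eqref{eq:Cancellation} states. Your fallback of showing the residue is a total $y$-derivative is unnecessary, and it would only give the weaker integrated statement.
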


\begin{proof}
    First, we consider the ``b" terms. In Section \ref{Sec: Tensor zero modes}, it was shown that the zero modes $h_{\mu\nu}$ are traceless with respect to the background metric $\bar{g}^{\mu\nu}$. Therefore, the trace of $h_{\mu\nu}$ is at least  linear in $T$,
\begin{equation}
h^{\mu}_{\,\,\mu}=h_{\mu\nu}g^{\mu\nu}=\mathcal{O}(T).
\end{equation}
This indicates that all terms with $(h^{\mu}_{\,\,\mu})^2$ vanish at order $T$. This annihilates Term 1b, Term 4b, Term 5b, Term 6b, and Term 9b. There are three ``b" terms left: Term 3b, Term 8b, and Term 10b, as they are linear in $h^{\mu}_{\,\,\mu}$. Applying our background metric \eqref{Eq: BG ansatz} and the tensor zero modes \eqref{Eq: Tensor Zero Modes}, direct evaluation shows
\begin{equation}
\label{Eq: Ricci trace} h_{\alpha\beta}\bar{R}^{\alpha\beta}=0.
\end{equation}
Therefore, 
\begin{equation}
    h_{\alpha\beta}R^{\alpha\beta}=\mathcal{O}(T).
\end{equation}
We conclude that Term 3b also vanishes at order $T$. To show that Term 8b and Term 10b vanish, we start from the Einstein equation
\begin{equation}
\label{Eq: Einstein equation}
\begin{aligned}
    R_{\mu\nu} - \tfrac{1}{2} g_{\mu\nu} R
&= \frac{1}{2}f_{AB}(\phi)\,\partial_\mu \phi^{A}\partial_\nu \phi^{B}-\frac{1}{4}\,g_{\mu\nu}\,f_{AB}(\phi)\,\partial_\rho \phi^{A}\partial^\rho \phi^{B} \\
&\quad - \frac{1}{2}g_{\mu\nu}\,V(\phi) + 2\,\mathcal{C}_{IJ}(\phi)\left( 
      F^{I}{}_{\mu\rho}F^{J}{}_{\nu}{}^{\rho}
      - \tfrac{1}{4}\,g_{\mu\nu}\,F^{I}_{\rho\sigma}F^{J\,\rho\sigma}
   \right).
\end{aligned}
\end{equation}
Applying the above Einstein equation to the extremal configuration and contracting both sides with $h^{\mu\nu}$ gives
\begin{equation}
\label{Eq: FF trace}
\begin{aligned}
h^{\mu\nu} \,\bar{R}_{\mu\nu}
&= \frac{1}{2}h^{\mu\nu}\, f_{AB}(\bar{\phi})\, \partial_\mu \bar{\phi}^{A}\, \partial_\nu \bar{\phi}^{B}
  + 2\,\mathcal{C}_{IJ}(\bar{\phi})\, h^{\mu\nu}\, \bar{F}^{I}{}_{\mu\rho}\, \bar{F}^{J}{}_{\nu}{}^{\rho}.
\end{aligned}
\end{equation}
As $\bar\phi^A$ depends only on $\theta_m$  and , at the same time, $h^{mm}$ vanishes, we have 
\begin{equation}
    h^{\mu\nu}\, f_{AB}(\bar{\phi})\, \partial_\mu \bar{\phi}^{A}\, \partial_\nu \bar{\phi}^{B}=0,
\end{equation}
which implies that Term 10b vanishes at order $T$.
Combining \eqref{Eq: Ricci trace} and \eqref{Eq: FF trace}, we find that Term 8b also vanishes at order $T$. 
The first point of the empirical finding has been proved: Up to order $T$, all ``b" terms vanish 
by themselves. 

For the other terms, we use the Einstein equation again. With a slight modification of the indices,
\begin{align}
\label{Eq: Modified EE}
\nonumber
& R^{\alpha\mu}\,g^{\beta\nu}
-\tfrac{1}{2}\,g^{\alpha\mu}\,R\,g^{\beta\nu}
+\tfrac{1}{2}\, g^{\alpha\mu}\,V\,g^{\beta\nu}
+\tfrac{1}{2}\,\mathcal{C}_{IJ}\,\,g^{\alpha\mu}\,F^{I}_{\rho\sigma}F^{J\,\rho\sigma}g^{\beta\nu}-2\,\mathcal{C}_{IJ}F^{I\,\alpha\rho}\,F^{J\mu}{}_{\rho}g^{\beta\nu}
\\
&\qquad
+\tfrac{1}{4}\,g^{\alpha\mu}\,f_{AB}\,\partial_{\rho}\phi^{A}\partial^{\rho}\phi^{B}\,g^{\beta\nu} 
- \tfrac{1}{2}f_{AB}\,\partial^{\alpha}\phi^{A}\partial^{\mu}\phi^{B}\,g^{\beta\nu}
= 0 .
\end{align}
After contracting twice with $h_{\mu\nu}$, we find \eqref{Eq: Modified EE} has a very similar form to the sum of the rest terms, 
\be
\label{Eq: term a s}
\begin{aligned}
&{\rm Term \, 3a}+{\rm Term\, 4a}+{\rm Term \, 5a}+{\rm Term \, 6a}+{\rm Term \, 7}+{\rm Term \, 8a}+{\rm Term \, 9a}+{\rm Term \, 10a}\\
=&h_{\alpha\beta}^*\Bigl(R^{\alpha\mu}\,g^{\beta\nu}
-\tfrac{1}{2}\,g^{\alpha\mu}\,R\,g^{\beta\nu}
+\tfrac{1}{2}\, g^{\alpha\mu}\,V\,g^{\beta\nu}
+\tfrac{1}{2}\,\mathcal{C}_{IJ}\,g^{\alpha\mu}\,F^{I}_{\rho\sigma}F^{J\,\rho\sigma}\, g^{\beta\nu}\underline{-2\,\mathcal{C}_{IJ}\,F^{I\alpha\mu}\,F^{J\beta\nu}}
\\
&\qquad\underline{-4\,\mathcal{C}_{IJ}F^{I\alpha\rho}F^{J\mu}{}_{\rho} g^{\beta\nu}}\,
+\tfrac{1}{4}\,g^{\alpha\mu}\,f_{AB}\,\partial_{\rho}\phi^{A}\partial^{\rho}\phi^{B}\,g^{\beta\nu}\underline{-f_{AB}\,\partial^{\alpha}\phi^{A}\partial^{\mu}\phi^{B}\,g^{\beta\nu}}
\Bigr)h_{\mu\nu},
\end{aligned}
\ee
with difference underlined. Now we substitute the small temperature configuration, $\{g=\bar{g}+\delta g,\phi=\bar\phi+\delta\phi,F=\bar{F}+\delta F\}$. Direct evaluation shows 
\begin{equation}
h_{\alpha\beta}^*\,\mathcal{C}_{IJ}\left(-2F^{I\alpha\mu}F^{J\beta\nu}-2g^{\beta\nu}F^{I\alpha\rho}F^{J\mu}{}_{\rho}\right)h_{\mu\nu}=0,
\end{equation}
and
\begin{equation}
h_{\alpha\beta}^*f_{AB}\,\partial^{\alpha}\phi^{A}\partial^{\mu}\phi^{B}\,g^{\beta\nu}h_{\mu\nu}=0,
\end{equation}
which fills the gap between \eqref{Eq: Modified EE} and \eqref{Eq: term a s}. The second point of the empirical findings is then proved.
\end{proof}

\subsection{The lifted eigenvalues}
\label{Subsec: lifted eigenvalues}
\textit{Lemma \ref{lem: 3}} shows the existence of the tensor zero modes. With the help of \textit{Lemma \ref{lem: 4}}, the Lichnerowicz operator is simplified to \eqref{eq:Cancellation}. We are now ready to prove a theorem stating the universality of the $\tfrac{3}{2}\log{T}$ correction. 

\begin{theorem}
\label{the: 1}
A black hole one-parameter family of \textit{type A} in dimensions $D=4,5,6$ admits a set of tensor zero modes at extremality, $T\to0$. With a mild assumption, \textit{Assumption \ref{ass: 3}}, shown below, the contribution of these zero modes to the thermodynamic entropy at low temperatures, computed via the Euclidean path integral,  always contains a temperature dependence of the form $\frac{3}{2}\log{T}$.
\end{theorem}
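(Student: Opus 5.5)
The plan has four steps: establish the zero modes, compute their lifted eigenvalues, fix the $n$-dependence of those eigenvalues, and do the Gaussian integral.

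\emph{Existence of zero modes.} This is immediate from Lemma~\ref{lem: 3}. The modes $h^{(n)}_{\mu\nu}$ with $|n|\ge 2$ are normalizable, sit in the traceless, divergence-free gauge, and are annihilated by $\Delta_L$ at $T=0$.

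\emph{Reduction of the eigenvalue shift.} Lemma~\ref{lem: 4} reduces the order-$T$ shift to
\begin{equation*}
\delta\Lambda_n\,\|h^{(n)}\|^2=\int d^Dx\,\sqrt{\bar g}\;h^{(n)*}_{\alpha\beta}\,\delta\!\left(\tfrac12 g^{\alpha\mu}g^{\beta\nu}\square+R^{\alpha\mu\beta\nu}\right)h^{(n)}_{\mu\nu}.
\end{equation*}
This is first-order perturbation theory. Off-diagonal mixing among the $h^{(n)}$ is excluded by the $e^{in\tau}$ dependence and the $\varphi_i$-independence of $\delta g$ in Assumption~\ref{ass: 2}.

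I would substitute $\delta g=T(\delta g_1+\delta g_2)$ into $\delta\square$ and $\delta R^{\alpha\mu\beta\nu}$ term by term.

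\emph{Contribution of $\delta g_2$.} The components of $\delta g_2$ are bounded at $y\to\infty$. Paired with $h^{(n)}\propto((y-1)/(y+1))^{|n|/2}$, the integrand behaves as a total $y$-derivative plus terms whose $y$-integral is $n$-independent or vanishes. I would show explicitly that it contributes only a constant times $\delta_{|n|,2}$, and hence only an overall $T$-independent normalization factor.

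\emph{Contribution of $\delta g_1$ and the $n$-dependence.} The $y$-integrals take the form $\int_1^\infty dy\,P(y)\,((y-1)/(y+1))^{|n|}/(y^2-1)^k$ with polynomial $P$. The factor $g_1(\theta)$ in $h^{(n)}$ cancels against the warp factor of $AdS_2$ in the contraction, so the $\theta$-integral factorizes into a single constant
\begin{equation*}
\mathcal{A}=\int\prod_m d\theta_m\,\sqrt{\textstyle\prod_m g_{mm}\det g_{ij}}\;\bigl(\text{linear combination of }\delta g_{\tau\tau},\,\delta g^{(1)}_{yy},\,\delta g_{mm},\,\delta g_{ij},\,\delta g^{(1)}_{\tau i}\bigr).
\end{equation*}
The $y$-integrals then yield $\delta\Lambda_n=\mathcal{A}\,|n|\,T+c\,\delta_{|n|,2}T$. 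The $k_i$-dependent pieces ($-ik_jy\,\delta g_{ij}$ in $\delta g_1$) should combine with $\delta g_{ij}$ so that the $\varphi$-fibration cancels, mirroring how $f_3^{(j)}$ drops out of $h^{(n)}$.

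This step is the main obstacle. One must show, for general warped $g_1(\theta_n)$ and arbitrary $\delta g_{mm},\delta g_{ij}$, that the combination is exactly linear in $|n|$. That is where the restriction $D=4,5,6$ enters: the number of independent angular functions, $D-2$ of them, keeps the Riemann-tensor contractions tractable. I would first verify the $D=4$ case with $\theta$-dependent $g_1,g_2,g_3$ symbolically, then extend to $D=5,6$ by noting that the extra $\theta_m$ and $\varphi_i$ enter additively through diagonal blocks.

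Assumption~\ref{ass: 3} is presumably the requirement $\mathcal{A}\neq 0$, i.e.\ that the coefficient does not accidentally vanish. I would invoke it here.

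\emph{The Gaussian integral.} With $\delta\Lambda_n=\mathcal{A}|n|T$, the zero-mode contribution is
\begin{equation*}
\prod_{|n|\ge2}(\delta\Lambda_n)^{-1/2}=\prod_{n\ge2}\frac{1}{\mathcal{A}nT}.
\end{equation*}
Zeta regularization, with $\zeta(0)=-\tfrac12$ and $\sum_{n\ge2}1=\zeta(0)-1=-\tfrac32$, gives $\prod_{n\ge2}(\mathcal{A}T)^{-1}=(\mathcal{A}T)^{3/2}$. The factor $\prod_{n\ge 2} n$ is a $T$-independent constant.

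The result is $Z\sim Z_0\,T^{3/2}$, and $S=(1-\beta\partial_\beta)\log Z$ acquires the correction $\tfrac32\log T$. Writing $\mathcal{A}T$ as $T/T_q$ identifies the scale $T_q$.
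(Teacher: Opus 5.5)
Your route is the paper's route. You take the zero modes from Lemma~\ref{lem: 3} and use Lemma~\ref{lem: 4} to reduce to Term~1a~$+$~Term~2. The explicit order-$T$ evaluation, after normalization and the $y$-integral, gives $\tilde\Lambda\,n\,T$ for $n\ge3$ and an anomalous $\Lambda^{(2)}=\tilde\Lambda_2T$. Zeta regularization finishes; your $\sum_{n\ge2}1=-\tfrac32$ bookkeeping is equivalent to the paper's $(\tilde\Lambda_2T)^{-1}\prod_{n\ge3}(\tilde\Lambda nT)^{-1}$ with $\sum_{n\ge3}1=-\tfrac52$.

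The concrete gap is where you replace $\Lambda_2=(2\mathcal{A}+c)T$ by $2\mathcal{A}T$ and call the ratio a $T$-independent constant. That step requires $2\mathcal{A}+c\neq0$, which your guess for Assumption~\ref{ass: 3} ($\mathcal{A}\neq0$ only) does not supply. If $2\mathcal{A}+c$ vanished, the $n=\pm2$ pair would not be lifted at order $T$ and the power of $T$ would change. This is why the paper's Assumption~\ref{ass: 3} demands $\tilde\Lambda_2\neq0$ separately. It also requires smoothness of $\tilde\lambda,\tilde\lambda_2$ on $[0,\pi]$ so the $\theta$-integrals exist.

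This is a condition on the full $n=2$ eigenvalue, not on a $\delta g_2$ piece alone. Comparing the two integrands in Appendix~\ref{App: Lifted eigenvalue}, at $n=2$ they differ by $\bigl(-\delta g_{\tau\tau}+2ik\,\delta g^{(1)}_{\tau\varphi}+\delta g^{(1)}_{yy}+\delta g^{(2)}_{yy}+k^2\delta g_{\varphi\varphi}\bigr)/(8g_1^2)$. That difference mixes $\delta g_1$ and $\delta g_2$ data.

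Two of your intermediate justifications also do not hold as stated, though the paper does not rely on them.

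First, $g_1(\theta)$ does not simply cancel. $\delta\square$ and $\delta R^{\alpha\mu\beta\nu}$ bring in $g_1',g_1'',g_2',g_3'$ and up to second $\theta$-derivatives of the $\delta g$'s. So $\tilde\lambda(\theta)$ is not a derivative-free weighted average like your $\mathcal{A}$. Its $n$-independence is an outcome of the explicit $y$-integration, not of a cancellation.

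Second, you cannot infer $D=5,6$ from $D=4$ by letting the extra $\theta_m,\varphi_i$ enter ``additively through diagonal blocks''. $g_{ij}(\theta_n)$ is generically non-diagonal, as for doubly spinning five-dimensional black holes. Every metric function depends on all $\theta_n$, and each $\varphi_i$ is fibred over AdS$_2$ through its own $k_i$.

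The paper settles both the linearity in $n$ and the $D=5,6$ cases only by repeating the full computer-aided contraction in each dimension. Your argument needs the same.
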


\begin{proof}[Proof (in four dimension)]
    We first add the linear-in-temperature configuration to the background
\begin{equation}
    g=\bar{g}+\delta g, \quad \Phi=\bar\Phi +\delta \Phi.
\end{equation} 
Then keeping the order $T$ contribution of \eqref{eq:Cancellation} and integrating over the whole spacetime gives the lifted eigenvalues. This calculation is cumbersome but direct. After determining the normalization of the zero modes, the lifted eigenvalues are given by
\begin{equation}
\label{Eq: the intigral}
    \Lambda^{(n)}=\int d^4x\; h_{\alpha\beta}^*\Delta^{\alpha \beta, \mu \nu}_{L}h_{\mu\nu}=\int d^4x\; \lambda(\theta,y),
\end{equation}
where the integral in $\tau,\,\varphi$ directions is simple. The integral in the $y$ direction can be done explicitly,
\begin{equation}
    \Lambda^{(n)}=\int d\theta\; \tilde{\lambda}(\theta) \,n\,T,\quad n>2,
\end{equation}
where $\tilde{\lambda}(\theta)$ depends on the functions in \eqref{Eq: BG ansatz 4d} and \eqref{Eq:g1+g2}, and it has no $n$ dependence. For $n=2$, the eigenvalue does not fall into the above pattern
\begin{equation}
    \Lambda^{(2)}=\int d\theta\; \tilde{\lambda}_2(\theta) \,T.
\end{equation}
For $n=0,1$, the eigenvalues are not lifted due to the $SL(2,R)$ symmetry of the extremal background.
We present the explicit expressions of $\lambda(\theta,y)$, $\tilde{\lambda}(\theta)$, and $\tilde{\lambda}_2(\theta)$ in Appendix \ref{App: Lifted eigenvalue}.
We assume
\begin{assumption}
    \label{ass: 3} The expressions for 
    $\tilde{\lambda}(\theta)$ and $\tilde{\lambda}_{2}(\theta)$ are assumed to be smooth in the interval $\theta\in[0,\pi]$, and
\begin{equation}
    \tilde{\Lambda}=\int_0^\pi\tilde{\lambda}(\theta)d\theta\neq0,\qquad \tilde{\Lambda}_2=\int_0^\pi\tilde{\lambda}_2(\theta)d\theta\neq0. 
\end{equation}
\end{assumption}

Then we have
\begin{equation}
\begin{aligned}
\Lambda^{(n)}=&\tilde{\Lambda}\;n\;T,\quad n\ge3,\\
\Lambda^{(2)}=&\tilde{\Lambda}_2\;T, \quad n=2.
\end{aligned}
\end{equation}
The one-loop correction to the partition function is given by
\begin{equation}
    \delta Z=(\det\Delta^{\alpha \beta, \mu \nu}_{L})^{-\frac{1}{2}}=(\tilde{\Lambda}_2\;T)^{-1}\prod_{n\ge3}(\tilde{\Lambda}\;n\;T)^{-1}
\end{equation}
Using zeta function regularization, we can evaluate the infinite product as
\begin{equation}
    \prod_{n\ge 3} \frac{\alpha}{n}
   \;=\;
   \frac{2}{\sqrt{2\pi}\,\alpha^{5/2}}
\end{equation}
The final answer takes the form
\begin{equation}
\label{Eq: theorem: 3/2logT}
    \delta \log{Z}=\log{\left(\frac{2\tilde{\Lambda}^{5/2}}{\sqrt{2\pi}\,\tilde{\Lambda}_2}T^{3/2}\right)}\sim\frac{3}{2}\log{T}+\log{\left(\frac{2\tilde{\Lambda}^{5/2}}{\sqrt{2\pi}\,\tilde{\Lambda}_2}\right)}.
\end{equation}
This completes the proof of \textit{Theorem \ref{the: 1}} in four dimensions.
\end{proof} 

\begin{remark}
    The $\frac{3}{2}\log{T}$ dependence comes from the one-loop Gaussian integral. When the lifted eigenvalues of the tensor zero modes are negative (see, for example,  the Nariai limit treated in \cite{Maulik:2025phe,Blacker:2025zca}), the logarithm develops an imaginary part. However, in this case, the Gaussian integral is not convergent. Our result here should be understood as an analytical continuation, which is not necessarily  physical. 
\end{remark}

\begin{remark}
We also carry out the proof for dimensions $D=5,6$. As the calculation is very similar to $D=4$ and the explicit intermediate results are lengthy, we only show the procedure and the explicit intermediate steps in four dimensions above. Moreover, we predict no conceptual bottleneck for $D\geq7$. The reason why we only prove the theorem up to six dimensions is that this proof depends on  explicit tensor calculations that are computer aided. Performing the same calculations in higher dimensions requires more computational resources and is unlikely to reveal deeper physical understanding on this subject.
\end{remark}

\begin{remark}
    Given that  $U(1)^{\lfloor\frac{D-1}{2}\rfloor}$ is a subgroup of $SO(D-1)$, spherically symmetric black holes are naturally included in \textit{Theorem \ref{the: 1}.}
\end{remark}

\section{Planar symmetric black branes}
\label{Sec: Planar}
The above lemmas and the theorem focus on black holes with axial or spherical symmetry. In this section, we consider a simple generalization to planar-symmetric backgrounds. The planar-symmetric case is simpler than the axially symmetric one, but it is quite relevant in applications of the AdS/CFT correspondence to certain low-temperature aspects of condensed matter (see, for example, \cite{Liu:2024gxr,PandoZayas:2025snm,Cremonini:2025yqe} ). We, therefore,  find a discussion of the planar-symmetric case pertinent.  This section does not serve as a proof. We only provide the central ingredients for the case of black branes.

We stay in four dimensions and keep only one gauge field as the matter sector for simplicity. The analogy of \eqref{Eq: BG ansatz}, in coordinates $\{\tau,y,x_1,x_2\}$, is given by
\begin{equation}    ds^2=g_1\!\left[(y^{2}-1)\,d\tau^{2}+\frac{dy^{2}}{y^{2}-1}\right]\;+\;g_2\left(dx_1^{2}+\,dx_2^2\right),
\end{equation}
where $g_1,\,g_2$ are constants. The background gauge field strength, compared to \eqref{Eq: Ansatz F}, is given by 
\begin{equation}
    F=F_{\tau y}\,d\tau \wedge dy,
\end{equation}
where $F_{\tau y}$ is a constant. The tensor zero modes still take the form of \eqref{Eq: Tensor Zero Modes} with $g_1(\theta)$ replaced by a constant $g_1$.
The assumption for the small temperature perturbation is given by 
\begin{equation}
\label{Eq:g1+g2 planar}
    \delta g_{\mu\nu}=T(\delta g_1+\delta g_2)_{\mu\nu},
\end{equation}
where
\begin{equation}
\label{Eq: delta g1 xy}
\begin{aligned}
\delta g_{1\mu\nu}dx^\mu dx^\nu \;=\;&
\,y\bigl(1-y^{2}\bigr)\,
      \delta g_{\tau\tau}\,
      d\tau^{2}+ \,
      \frac{y\,\delta g_{yy}^{(1)}}{1-y^{2}}
      \,dy^{2}
  \;+\; y\,\delta g_{x_1 x_1}\,dx_1^{2}
  \;+\; y\,\delta g_{x_2 x_2}\,dx_2^{2}\;,
\end{aligned}
\end{equation}

\begin{equation}
\label{Eq: delta g2 xy}
\begin{aligned}
\delta g_{2\mu\nu}dx^\mu dx^\nu \;=\;&
      (1-y)\,\delta g_{yy}^{(2)}\,d\tau^{2}+ 
      \frac{\delta g_{yy}^{(2)}}
           {(1-y)(1+y)^{2}}\,dy^{2}\;,
\end{aligned}
\end{equation}
and
\begin{equation}
\label{Eq: Ansatz:deltaF xy}
    \delta F= \delta F_{\tau y}(y)\, d\tau\wedge dy\,.
\end{equation}

Given the above ingredients, one can follow the steps in Section \ref{Sec: cancellation} and find the $\tfrac{3}{2}\log{T}$ correction to the thermodynamic entropy.

\section{Kerr-dS$_4$ black holes as an example}
\label{Sec: Kerr_dS}
In this section, we provide an explicit example in which our theorem can be applied. We also clarify the role of ensemble choices in the selection of the one-parameter  type A family of solutions in the  one-loop correction of the thermodynamic entropy. 

We turn off the matter field and choose a positive cosmological constant $\Lambda=3/\ell^2$ in \eqref{Eq: Maxwell action}. The stationary and axial-symmetric solution in four dimensions is the Kerr-dS$_4$ black hole. Relevant studies on non-rotating asymptotic dS black holes appeared in \cite{Maulik:2025phe,Blacker:2025zca}. Here we consider the rotating case. The Kerr-dS$_4$ solution is given by
\begin{equation}
    ds^2
= -\frac{\Delta_r}{\rho^2}
  \Bigl(\dd t - \frac{a}{\Theta}\sin^2\theta\,\dd \phi\Bigr)^{2}
+ \frac{\rho^2}{\Delta_r}\,\dd r^2
+ \frac{\rho^2}{\Delta_\theta}\,\dd \theta^2
+ \frac{\Delta_\theta}{\rho^2}\,\sin^2\theta
  \Bigl(a\,\dd t - \frac{r^2 + a^2}{\Theta}\,\dd \phi\Bigr)^{2}
\end{equation}
with parameters
\begin{equation}
\begin{aligned}
\Delta_r       &= (r^2 + a^2)\Bigl(1 - \frac{r^2}{\ell^2}\Bigr) \;-\; 2 M r, 
&\quad
\Delta_\theta &= 1 + \frac{a^2}{\ell^2}\cos^2\!\theta,\\
\rho^2         &= r^2 + a^2\cos^2\!\theta,
&\quad
\Theta        &= 1 + \frac{a^2}{\ell^2}\,.
\end{aligned}
\end{equation}
This solution has three parameters $\{M,a,l\}$. It is also convenient to parameterize this solution by the radii of its inner, outer, and cosmological horizons $r_h=\{r_-,r_+,r_c\}$. In order to do so, let us first express $\Delta_r$ by the radii,
\begin{equation}
\label{Eq: Delta_r Kerr dS}
    \Delta_r=- \frac{1}{\ell^2}\,(r + r_- + r_+ + r_c)\,(r - r_-)\,(r - r_+)\,(r - r_c).
\end{equation}
The original parameters in the metric can be expressed by the radii,
\begin{equation}
    \begin{aligned}
        a^2=&l^2-r_c^2-r_+^2-r_-^2-r_cr_+-r_cr_--r_+r_-,\\
        M=&\frac{1}{2l^2}(r_c+r_+)(r_c+r_-)(r_++r_-),\\
        l^2=&\frac{1}{2}\Bigl(
  r_{c}^{2} + r_{c}r_{-} + r_{-}^{2}
  + r_{c}r_{+} + r_{-}r_{+} + r_{+}^{2}
  + \\&\sqrt{%
      4\,r_{c}r_{-}r_{+}\,(r_{c}+r_{-}+r_{+})
      + \bigl(r_{c}^{2} + r_{+}^{2}+ r_{-}^{2} + r_{-}r_{+} + r_{c}r_{-}+r_{c}r_{+}\bigr)^{2}%
    }
\Bigr).
    \end{aligned}
\end{equation}
For each horizon, we can define the associated entropy and temperature by 
\begin{equation}
    S_h=\pi r_h^2,\quad T_h=\frac{1}{4\pi}|\Delta_r'(r_h)|. 
\end{equation}
Similarly to the Reissner-Nordstrom-de Sitter (RNdS$_4$) in \cite{Maulik:2025phe,Blacker:2025zca}, the physical region for parameters $a$ and $M$ with specific $l$ has a ``shark fin" shape shown in Figure \ref{fig:shark finn}.
\begin{figure}
    \centering
    \includegraphics[width=0.8
    \linewidth]{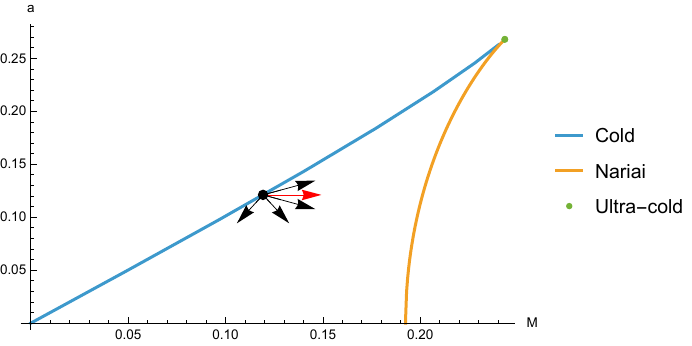}
    \caption{Phase space of Kerr-dS$_4$ black hole: We set $\ell=1$ in this plot. The arrows represent different ensemble choices, the red arrow is the choice used in Section \ref{Sec: Near-cold}.}
    \label{fig:shark finn}
\end{figure}
Two of the boundaries and their point of intersection correspond to the three different types of extremal solutions:
\begin{itemize}
    \item \textbf{Cold}: $r_-=r_+<r_c$.
    \item \textbf{Nariai}: $r_-<r_+=r_c$.
    \item \textbf{Ultra-cold}: $r_-=r_+=r_c$.
\end{itemize}
Expression \eqref{Eq: Delta_r Kerr dS} highlights the fundamental difference between cold or Nariai and ultra-cold black holes. For both cold and Nariai black holes, the radius of the horizon $r_0=r_-=r_+$ or $r_0=r_+=r_c$ is a double root of $\Delta_r$, while in the ultra-cold case, $r_0=r_-=r_+=r_c$ is a triple root. The cold and Nariai limits are examples of the \textit{quadratic extremality} defined before in Definition \ref{def: 1}, while the ultra cold limit is not. As pointed out in \cite{Maulik:2025phe,Blacker:2025zca}, the near-horizon geometry of ultra-cold black holes does not have an AdS$_2$ structure. We will discuss the issue of ultra-cold black holes later in this section, as well as some subtleties in the treatment  of the Nariai black holes. For now, let us first focus on our main goal, namely, to provide an explicit application of our theorem. We use the simplest extremality, the cold limit .

\subsection{Near-cold black holes}
\label{Sec: Near-cold}
In studying the near-horizon geometry of near-extremal black holes, it is convenient to use $\{\ell,r_0,T\}$ as parameters. Other parameters will be expressed as a series of $T$ with coefficients containing $\ell$ and $r_0$:
\begin{align}
\label{Eq: Kerr dS constant a-a}
a &= \frac{\sqrt{\ell^{2} r_{0}^{2}-3 r_{0}^{4}}}{\sqrt{\ell^{2}+r_{0}^{2}}}, \\[4pt]
M &= \frac{ r_{0}\,(-\ell^{2}+r_{0}^{2})^{2}}{\ell^{2}(\ell^{2}+r_{0}^{2})}
    + \frac{8\,\ell^{6}\pi^{2} r_{0}^{3}(\ell^{2}+r_{0}^{2})\,T^{2}}
           {(\ell^{2}+3 r_{0}^{2})^{2}\,(\ell^{4}-6\ell^{2} r_{0}^{2}-3 r_{0}^{4})}
    + \mathcal{O}(T^{3}), \label{Eq: Kerr dS constant a-M} \\[4pt]
r_{+} &= r_{0}
    + \frac{4\,\ell^{4}\pi\, r_{0}^{2}(\ell^{2}+r_{0}^{2})\,T}
           {(\ell^{2}+3 r_{0}^{2})\,(\ell^{4}-6\ell^{2} r_{0}^{2}-3 r_{0}^{4})}
    + \mathcal{O}(T^{2}).  \label{Eq: Kerr dS constant a-rp}
\end{align}
Equations \eqref{Eq: Kerr dS constant a-a}-\eqref{Eq: Kerr dS constant a-rp} explicitly determine the one-parameter family of type A introduced in Definition \ref{def: 3}. 
Note that we have chosen $a$ to stay constant and independent of the temperature, $T$, which is equivalent to a choice of ensemble. More details about ensemble choice will be discussed in the next subsection. To zoom into the near-horizon geometry, we make the following coordinate transformation
\begin{equation}
\label{eq:coord-trans}
r \to r_{+}+T k_{1}(y-1),\quad
t \to -\frac{i}{k_{2}T}\tau,\quad
\phi \to \varphi - \frac{i k_{3}}{T}\tau + i k_{4}\tau,
\end{equation}
with 
\begin{equation}
    \begin{aligned}
k_{1} &= \frac{4\,\ell^{4}\,\pi\, r_{0}^{2}\,(\ell^{2}+r_{0}^{2})}
              {(\ell^{2}+3 r_{0}^{2})\bigl(\ell^{4}-6\ell^{2} r_{0}^{2}-3 r_{0}^{4}\bigr)},\qquad 
k_{2} = 2\pi i\,\frac{\ell^{2}(\ell^{2}+r_{0}^{2})}
              {\ell^{4}+2\ell^{2} r_{0}^{2}-3 r_{0}^{4}},\\[4pt]
k_{3} &= \frac{\sqrt{\ell^{2}-3 r_{0}^{2}}\;(\ell^{2}+3 r_{0}^{2})}
              {2 k_{2}\,\ell^{2} r_{0}\,\sqrt{\ell^{2}+r_{0}^{2}}},\qquad 
k_{4} = \frac{\sqrt{\ell^{2}-3 r_{0}^{2}}\;(\ell^{2}+r_{0}^{2})(\ell^{2}+3 r_{0}^{2})}
              {\ell^{4}-6\ell^{2} r_{0}^{2}-3 r_{0}^{4}}.
\end{aligned}
\end{equation}
The Euclidean near-horizon geometry is given by \eqref{Eq: BG ansatz 4d} with the following functions:
\newcommand{\A}{\ell^{2}+r_{0}^{2}+(\ell^{2}-3r_{0}^{2})\cos^{2}\theta}
\newcommand{\B}{\ell^{4}-6\ell^{2}r_{0}^{2}-3r_{0}^{4}}
\begin{equation}
 \begin{aligned}
g_{1}(\theta) &= \frac{\ell^{2} r_{0}^{2}\bigl[\ell^{2}+r_{0}^{2}+(\ell^{2}-3 r_{0}^{2})\cos^{2}\theta\bigr]}
                    {\ell^{4}-6\ell^{2} r_{0}^{2}-3 r_{0}^{4}},\quad
g_{2}(\theta) = \frac{\ell^{2} r_{0}^{2}\bigl[\ell^{2}+r_{0}^{2}+(\ell^{2}-3 r_{0}^{2})\cos^{2}\theta\bigr]}
                    {\ell^{2}(\ell^{2}+r_{0}^{2})+r_{0}^{2}(\ell^{2}-3 r_{0}^{2})\cos^{2}\theta},\\[2pt]
g_{3}(\theta) &= \frac{4\,\ell^{2} r_{0}^{2}}
                    {(\ell^{2}+3 r_{0}^{2})^{2}\bigl[\ell^{2}+r_{0}^{2}+(\ell^{2}-3 r_{0}^{2})\cos^{2}\theta\bigr]},\quad 
k = \frac{\sqrt{\ell^{2}-3 r_{0}^{2}}\;(\ell^{2}+3 r_{0}^{2})\,\sqrt{\ell^{2}+r_{0}^{2}}}
         {\ell^{4}-6\ell^{2} r_{0}^{2}-3 r_{0}^{4}}.
\end{aligned}   
\end{equation}
The small-temperature perturbation of the metric is given by \eqref{Eq:g1+g2}-- \eqref{Eq: delta g2}. We show the explicit functions in Appendix \ref{App: Small temperature functions}. The zero modes are given by \eqref{Eq: Tensor Zero Modes}. After direct calculation, one finds that the lifted eigenvalues take the form of 
\begin{equation}
\alpha(\ell, r_0)\,\, nT
\,\quad n\ge2,
\end{equation}
where $\alpha(\ell, r_0)$ stands for a very long expression in terms of $\ell$ and $r_0$. As we showed in previous sections, the above set of eigenvalues leads to the $\frac{3}{2}\log{T}$ correction to the thermodynamic entropy.

\subsection{Ensemble choices}
In the various explicit examples considered in \cite{Maulik:2024dwq} it was noted that the final result of $\frac{3}{2}\log{T}$ was robust with respect to various ensemble choices. Here we can pin point, at a concrete technical level, the universality of the result vis-\`a-vis the ensemble choice. More precisely, we can track how the ensemble choice affects the explicit modes, \eqref{Eq:g1+g2}--\eqref{Eq: delta g2}, that drive the black holes away from extremality. A natural question is: what is the difference between the small temperature perturbations when turning on temperature in different ensembles? To answer this question, we start from one point on the ``cold" boundary of the ``shark fin" and turn on temperature in different directions (see Fig. \ref{fig:shark finn}), and explore its implications for the modes in \eqref{Eq:g1+g2}-\eqref{Eq: delta g2}. Instead of keeping $a$ fixed as in \eqref{Eq: Kerr dS constant a-a}, we allow $a$ to be also a series in temperature and keep $C_k\equiv\frac{a-a_0}{M-M_0}$ as a constant labeling different ensembles, where $a_0,\,M_0$ are the values at extremality,
\begin{equation}
    \begin{aligned}
        a &= \frac{\sqrt{\ell^{2} r_{0}^{2}-3 r_{0}^{4}}}{\sqrt{\ell^{2}+r_{0}^{2}}}+C_k\left[c_m^{(2)}\,T^2+\mathcal{O}(T^3)\right], \\[4pt]
M &= \frac{ r_{0}\,(-\ell^{2}+r_{0}^{2})^{2}}{\ell^{2}(\ell^{2}+r_{0}^{2})}
    + c_m^{(2)}\,T^2
    + \mathcal{O}(T^{3}), \\[4pt]
r_{+} &= r_{0}
    + c_+^{(1)}\,T+ \mathcal{O}(T^{2}).
    \end{aligned}
\end{equation}
Our original ensemble choice corresponds to $C_k=0$, {\it  i.e.}, keeping $a$ as a constant as the temperature is turned on. We choose five different ensembles $C_k=\{-2,\,-\frac{1}{2},\,0,\,\frac{1}{2},\,2\}$, which are shown by arrows in Fig.\ref{fig:shark finn}. All of these choices have the same near-horizon-extremal geometry $\bar{g}$. All small temperature perturbations take the form of \eqref{Eq:g1+g2}--\eqref{Eq: delta g2}. The only difference appears in the explicit expressions of $\delta g_{\tau\varphi}^{(22)}(\theta)$ which is a mode of $\delta g_2$. Near the $AdS_2$ boundary, its wavefunction density goes to zero. It does not change the asymptotics of the geometry. Therefore, it does not contribute to the lifted eigenvalues of the zero modes. It can also be seen explicitly from \eqref{Eq: Explicit lifted eigenvalues n>2} and \eqref{Eq: Explicit lifted eigenvalue n=2} that $\delta g_{\tau\varphi}^{(22)}$ does not enter the lifted eigenvalues. From this example, we clarify how the $\frac{3}{2}\log{T}$ correction is independent of the choice of ensemble.

\subsection{Near-Nariai and Near-ultra-cold black holes}
As mentioned in the remarks after the proof of \textit{Theorem \ref{the: 1}}, the lifted eigenvalues of the tensor zero modes are actually allowed to be negative, i.e., $\tilde{\Lambda}<0$. This is exactly the case of Nariai black holes. For cold black holes and all other exremal black hole examples in \cite{Maulik:2024dwq}, the radius of the horizon is a double root of $\Delta_r$:
\begin{equation}
    \Delta_r=L(r-r_0)^2+\mathcal{O}\left((r-r_0)^3\right),
\end{equation}
where $L$ is a positive constant. By taking the near-horizon limit, we end up with an AdS$_2$ geometry for the $\{\tau,y\}$ part. However, in the case of Nariai black holes, $L$ is negative. After taking the near-horizon limit, we obtain the ``-AdS$_2$" geometry mentioned in \cite{Blacker:2025zca}. This minus sign propagates to the lifted eigenvalues. With a negative eigenvalue, the Gaussian integral is not well defined. However, at least in the sense of analytic continuation, this will only influence the form of the temperature-independent term in \eqref{Eq: theorem: 3/2logT} and contribute an imaginary constant. See \cite{Maldacena:2024spf,Chen:2025jqm} for more discussion on the negativity of the state-counting partition function. \cite{Aalsma:2025lcb} also proposes that, with a suitable choice of observer, no logarithmic temperature correction is needed for Nariai black holes. As clarifying the subtlety of Nariai black holes is not the main purpose of this paper, we postpone a more detailed investigation of this subtlety to future work.

The ultra-cold black holes are not included in our theorem. As mentioned in \cite{Maulik:2025phe,Blacker:2025zca} , the near-horizon extremal geometry of an ultra-cold black hole is $Mink_2$ in Lorentzian signature, which does not satisfy \eqref{Eq: BG ansatz 4d}. Though we are not going to solve the problem of ultra-cold black holes in this paper, we try to provide a new perspective from which the appearance of $Mink_2$ is not surprising. One will see the $\textit{Mink}_2$ geometry is a natural result when we take the near-horizon limit too fast. The flat geometry is actually a reflection of a well-known fact: The near-horizon geometry of a finite temperature black hole is the Rindler spacetime. To illustrate this point, in the following, we show that even for near extremal Reissner--Nordstr\"om{} black holes with flat asymptotics, if we take the near-horizon limit too fast, a flat geometry emerges. This is also one of the reasons why we use temperature as both the ``near-horizon" and ``near-extremal" parameters simultaneously in \eqref{eq:coord-trans}. The real meaning of this limit is that the ``near-horizon" parameter and the ``near-extremal" parameter should be infinitesimals of the same order. 

We start with the Reissner--Nordstr\"om{} black hole solution of Einstein-Maxwell equation
\begin{equation}
    ds^{2}
  = -f(r)\,dt^{2}
    + \frac{dr^{2}}{f(r)}
    + r^{2}\left(d\theta^{2}+\sin^{2}\theta\,d\varphi^{2}\right),
\qquad
f(r)=1-\frac{2M}{r}+\frac{Q^{2}}{r^{2}},
\end{equation}
where \(M\) is the ADM mass and \(Q\) the (electric) charge. The gauge potential and field strength are given by
\begin{equation}
    A = -\frac{Q}{r}\,dt,\qquad F = dA = \frac{Q}{r^{2}}\,dr\wedge dt.
\end{equation}
The temperature of the black hole is given by 
\begin{equation}
    T=\frac{f'(r)}{4\pi}\big|_{r=r_+},
\end{equation}
where $r_+$ is the larger root of $f(r)=0$. Keeping  $Q$ as a constant, $Q=r_0$, we have the following temperature dependence of the mass and radius of the outer horizon:
\begin{equation}
    M=r_0+2\pi^2\,r_0^3\,T^2+\mathcal{O}(T^3),\qquad r_-=r_0-2\pi\,r_0^2\, T+\mathcal{O}(T^2).
\end{equation}
To obtain the near-horizon-extremal geometry of a black hole, there are two limits to take: $T\to0$ and $r-r_+\to 0$. They do not generally commute. To tract the difference between them, we use the following coordinate transformation
\begin{equation}
\label{eq: coord-trans, diff limits}
r \to r_{+}+\lambda\, \tilde{r}
\end{equation}
where $\lambda$ is the infinitesimal parameter that controls ``how fast" we zoom into the near-horizon region, i.e. the ``near-horizon" parameter. Then we do a double series expansion of the function $f(r)$ around $T=0,\,\lambda=0$:
\begin{equation}
\label{Eq: F expansion}
\begin{aligned}
        f(r)=&\frac{1}{r^2}(r-r_-)(r-r_+)\\
        =&\left[4\pi\tilde{r}\,T+\mathcal{O}(T^2)\right]\lambda+\left[\frac{\tilde{r}^2}{r_0^2}+\mathcal{O}(T)\right]\lambda^2+...
\end{aligned}
\end{equation}
In taking the near-extremal limit $T\to 0$ and the near-horizon limit $\lambda\to 0$, we have the following three options:
\begin{itemize}
    \item $T\to0,\,\lambda\to0,\,\frac{T}{\lambda}\to0$,
    \item $T\to0,\,\lambda\to0,\,\frac{T}{\lambda}\to {\rm Finite}$,
    \item $T\to0,\,\lambda\to0,\,\frac{T}{\lambda}\to\infty$.
\end{itemize}
Let us discuss these limits respectively.

\paragraph{Near-horizon exact extremal limit:} Taking the first choice, $\frac{T}{\lambda}\to 0$ is actually equivalent to taking the near-horizon limit $\lambda\to0$ at exact extremality $T=0$. The leading contribution in \eqref{Eq: F expansion} is 

\begin{equation}
    f(r)\approx\lambda^2\frac{\tilde{r}^2}{r_0^2}.
\end{equation}
The whole metric is 
\begin{equation}
    ds^2=-\lambda^2\frac{\tilde{r}^2}{r_0^2}dt^2+\frac{r_0^2\,d\tilde{r}^2}{\tilde{r}^2}+r_0^2(d\theta^2+\sin^2\theta\,d\phi^2)
\end{equation}
By proper rescaling, one can find this is actually the well known $AdS_2\times S^2$ throat with $AdS_2$ in the Poincar\'e coordinates.

\paragraph{Near-horizon-near-extremal limit:} For the second choice, $\frac{T}{\lambda}\to Finite$, we directly choose $\lambda=T$ for convenience. In this case, the two terms in \eqref{Eq: F expansion} are of the same order. We keep both of them:
\begin{equation}
f(r)\approx\lambda^2\left(4\pi\tilde{r}+\frac{\tilde{r}^2}{r_0^2}\right)
\end{equation}
By coordinate transformation $\tilde{r}\to k_1\,y+k_2$, with proper $k_1,\,k_2$, then rescaling $t$, one finds the following metric
\begin{equation}
    ds^2=-(y^2-1)dt^2+\frac{dy^2}{y^2-1}+r_0^2(d\theta^2+\sin^2\theta\,d\phi^2).
\end{equation}
This limit is actually what was used in the previous sections. This is also a $AdS_2\times S^2$ geometry, but in the ``black hole" coordinates. The advantage of this limit is twofold: First, in this limit, it is convenient to produce the small temperature perturbations, which are used to regulate the zero modes mentioned before. Second, this limit avoids the subtlety of working on the background of an exact extremal black hole, which we already know is not a well-defined saddle point because of large quantum fluctuations. This black hole patch describes the near-horizon geometry of a black hole with a small but nonzero temperature. The temperature of the original black hole can be extracted from the period of Euclidean time $\tau=i t$. 

\paragraph{Near-horizon of finite temperature black holes:} The last choice, $\frac{T}{\lambda}\to \infty$, means that we take the near-horizon limit ``faster" than the near-extremal limit. This leads to 
\begin{equation}
  f(r)\approx 4\pi\,\tilde{r}\,\lambda\, T.  
\end{equation}
The metric is given by
\begin{equation}
    ds^2=-4\pi\,\tilde{r}\,\lambda \,T \,dt^2+\frac{d\tilde{r}^2}{4\pi\,\tilde{r}\,\lambda\,T}+r_0^2\,(d\theta^2+\sin^2\theta\,d\phi^2).
\end{equation}
By coordinate transformation
\begin{equation}
    \tilde{r}\to\pi\,\lambda\,T\,\rho^2,\qquad t\to\frac{d\tilde{t}}{2\pi\,\lambda\,T},
\end{equation}
the metric takes the form of $Rind_2\times S^2$:
\begin{equation}
    ds^2=-\rho^2\,d\tilde{t}^2+d\rho^2+r_0^2\,(d\theta^2+\sin^2\theta\,d\phi^2).
\end{equation}
This reflects the well-known fact that the near-horizon geometry of a finite temperature black hole is the Rindler spacetime. This provides a possible explanation for why we encounter $Mink_2\times S^2$ when zooming into the near-horizon region of RN--dS$_4$ black holes  \cite{Maulik:2025phe,Blacker:2025zca}.


\section{Conclusions}
In this manuscript, we have considered a general theory of Einstein gravity coupled to Abelian vector fields and neutral scalars with arbitrary scalar potential. For  this class of theories, we have established four lemmas and a theorem that demonstrate the universality of the tensor modes logarithmic-in-temperature contribution  to the thermodynamic entropy: $\delta S= \frac{3}{2}\log T$. \textit{Lemma \ref{lem: 1}} and \textit{Lemma \ref{lem: 2}} characterize the general form of the near-horizon extremal configurations. \textit{Lemma \ref{lem: 3}} states the universal existence of a set of tensor zero modes. The structure of the zero modes, in particular their origin in large diffeomorphisms, creates a direct identification with the well-known Schwarzian modes in the context of near-horizon extremal black hole geometries. \textit{Lemma \ref{lem: 4}} systematizes a crucial simplification of the Linchnerowicz operator, which implies that the contribution of the tensor zero modes to the partition function is independent of the  matter sector. \textit{Assumption \ref{ass: 2}} helps characterize the way black holes are driven away from extremality. All these lemmas and assumptions work for any spacetime dimension greater or equal to four and include asymptotically flat, AdS and dS asymptotics. Finally, by explicit tensor calculations, we prove in \textit{Theorem \ref{the: 1}} that the $\tfrac{3}{2}\log{T}$ correction to the entropy is universal in dimensions four, five, and six. We then apply our findings to the  explicit example of the cold Kerr-dS$_4$ black hole. We show the role of ensemble choice and provide some comments on the subtlety of Nariai and ultra-cold limits of asymptotically dS black holes. We anticipate that increasingly more involved calculations are required to establish similar results in even higher dimensional black holes.

More broadly, the current manuscript expands and formalizes observations made for asymptotically flat and asymptotically AdS black holes in \cite{Maulik:2024dwq}.  We have clarified the underlying conditions leading to the universal tensor contribution to the one-loop level gravitational partition function. Our analysis can be seen as a low-energy discussion and can, in fact, be connected to low-energy theory living in the throat region which includes Schwarzian modes. The higher-dimensional nature of our computations gives us hope that whatever the final form of the theory (full gravitational path integral), the role of symmetries and its impact via zero modes are likely to remain part of the picture.

We leave open the issue of the fate of the zero mode in the full spacetimes; in this manuscript we have focused on the one-loop contribution from the near-horizon region. There has been gathering evidence that the zero modes are a result of taking the low-temperature limit \cite{Kolanowski:2024zrq}. Arguably, the best evidence arises from the series of work \cite{Bonelli:2021uvf,Arnaudo:2024rhv,Arnaudo:2024bbd,Arnaudo:2025btb}. Given the prominence of this issue, we hope to bridge the two analyses and explore beyond the particular results stated here. Another interesting direction would be to pursue potential connections to recent work aiming to clarify the gravitational path integral in asymptotically de-Sitter spacetimes \cite{Ivo:2025yek,Turiaci:2025xwi}.

\section*{Acknowledgments}
We are thankful to Giulio Bonelli, Roberto Empar\'an, Sabare Jayaprakash, Jim Liu, Sabyasachi Maulik, Augniva Ray, Alessandro Tanzini. This work is partially supported by the U.S. Department of Energy under grant DE-SC0007859. JZ was supported in part by a Leinweber Summer Graduate Fellowship.

\appendix

\section{Coordinate transformations}
\label{App: Coordinate transformation}
In this appendix we show the coordinate transformations that connect \eqref{Eq: gbar Kunduri} and \eqref{Eq: BG ansatz}. Let us first focus on the $dv^2,dv\,dr$ parts. We first perform the following transformation to move from the null coordinates to ``Poincare coordinates of $AdS_2$", 
\begin{equation}
    v\to T+\frac{1}{R},\quad r\to\frac{R}{A_0}.
\end{equation}
This gives
\begin{equation}
    ds_2^2\equiv\Gamma(\theta_n)[A_0 r^{2} dv^{2} + 2 \,dv\,dr]=-\frac{\Gamma(\theta_n)}{A_0}[ R^{2} dT^{2} + \frac{dR^2}{R^2} ].
\end{equation}
We perform the following transformation 
\begin{equation}
    R\to y-\Delta\cosh{t},\quad T\to\frac{\Delta \sinh{t}}{y-\Delta\cosh{t}},
\end{equation}
where $\Delta\equiv\sqrt{y^2-1}$. This brings us to the coordinates ${t,y}$ that we need in \eqref{Eq: BG ansatz}:
\begin{equation}
ds_2^2=g_1(\theta_n)[ -(y^2-1) dt^{2} + \frac{dy^2}{(y^2-1)^2} ],
\end{equation}
where we define $g_1(\theta_n)\equiv-\tfrac{\Gamma(\theta_n)}{A_0}$.
Now we turn to the $d\varphi_id\varphi_j$ and $d\varphi_idv$ parts. Under the above coordinate transformations,
\begin{equation}
\begin{aligned}
&(d\varphi_i+k^i r dv)\\
&\to\left(d\varphi_i+\frac{
k^i \left( -\Delta + y \cosh{t} + \sinh{t} \right)
}{
A_0 \left( y \Delta + \cosh{t} - y^2 \cosh{t} \right)
}dy\,+\frac{
k^i \left( -1 + y^2 \right) \left( -\Delta + y \cosh{t} + \sinh{t} \right)
}{
A_0 \left( y \Delta + \cosh{t} - y^2 \cosh{t} \right)
}\right).
\end{aligned}
\end{equation}
In order to eliminate the $dy$ term, we introduce
\begin{equation}
    \Phi(t,y)\equiv- \frac{2 \Delta \, \arctan \left( \frac{y - \tanh\frac{t}{2}}{\sqrt{-1 - y} \, \sqrt{-1 + y}} \right)}{A_0 \sqrt{-1 - y} \, \sqrt{-1 + y}}
- \frac{1}{A_0}  \log\left( y + \Delta \right) -\frac{1}{A_0}\log\left(  y \sinh{t}  -\cosh{t}\right),
\end{equation}
and perform coordinate transformation:
\begin{equation}
    \varphi_i\to\tilde{\varphi}_i-k^i\Phi(t,y)+\frac{k^i}{A_0}t.
\end{equation}
After replacing $k^i\to A_0\, \tilde{k}_i$, we find
\begin{equation}
\label{Eq: phi t coordinate Trans}
    (d\varphi_i+k^i r dv)\to\left(d\tilde{\varphi}_i-\tilde{k}_i(y-1)dt\right).
\end{equation}
The last step is to rotate to Euclidean time $t\to-i\tau$, after which one finds \eqref{Eq: BG ansatz}, where we have eliminated the tildes for simplicity of notation. 

Performing the same coordinate transformation to \eqref{Eq: F ansatz Kunduri}, one finds
\begin{equation}
    F^{I}_{v r}(\theta_n)\,dv\wedge dr\to \frac{iF^{I}_{v r}(\theta_n)}{A_0}\,d\tau\wedge dy
\end{equation}
Defining $F^{I}_{\tau y}(\theta_n)\equiv\frac{iF^{I}_{v r}(\theta_n)}{A_0}$ and applying \eqref{Eq: phi t coordinate Trans}, we find \eqref{Eq: Ansatz F} from \eqref{Eq: F ansatz Kunduri}. Again, we have omitted the tildes for simplicity.

\section{The small temperature perturbation functions}
\label{App: Small temperature functions}
This appendix shows explicit expressions of functions in the small temperature perturbations \eqref{Eq:g1+g2}--\eqref{Eq: delta g2} in the near-cold example in Section \ref{Sec: Near-cold}.

\begin{equation}
\begin{aligned}
\delta g_{\tau\tau}(\theta)
=&\Bigl(\ell^{6} \pi r_{0}^{3} (\ell^{4} - r_{0}^{4})^{2}
\bigl(
-13 \ell^{4} + 46 \ell^{2} r_{0}^{2} + 107 r_{0}^{4}
- 4 \left(5 \ell^{4} - 6 \ell^{2} r_{0}^{2} - 27 r_{0}^{4}\right)\cos(2\theta)
\\&+(\ell^{2} - 3 r_{0}^{2})^{2} \cos(4\theta)
\bigr)\Bigr)
/\left((\ell^{2} + 3 r_{0}^{2})
(\ell^{4} - 6 \ell^{2} r_{0}^{2} - 3 r_{0}^{4})^{3}
\left(\ell^{2} + r_{0}^{2} + (\ell^{2} - 3 r_{0}^{2}) \cos^{2}\theta\right)^{2}\right)
, \\[6pt]
\delta g_{yy}^{(1)}(\theta)
=& \frac{8 \ell^{6} \pi r_{0}^{3} (\ell^{2} + r_{0}^{2})^{2}
\left(\ell^{4} - 3 \ell^{2} r_{0}^{2} - 4 r_{0}^{4}
+ \ell^{2} r_{0}^{2} \cos(2\theta) - 3 r_{0}^{4} \cos(2\theta)\right)}
{(\ell^{2} + 3 r_{0}^{2}) \left(\ell^{4} - 6 \ell^{2} r_{0}^{2} - 3 r_{0}^{4}\right)^{3}}
, \\[6pt]
\delta g_{yy}^{(2)}(\theta)
=& -\,\frac{
4 \ell^{6} \pi r_{0}^{3} (\ell^{2} + r_{0}^{2})^{2}
\left(\ell^{4} - 2 \ell^{2} r_{0}^{2} - 7 r_{0}^{4}\right)
\left(3 \ell^{2} - r_{0}^{2} + (\ell^{2} - 3 r_{0}^{2}) \cos(2\theta)\right)
}{
\left(\ell^{4} - 6 \ell^{2} r_{0}^{2} - 3 r_{0}^{4}\right)^{3}
\left(\ell^{4} + 2 \ell^{2} r_{0}^{2} - 3 r_{0}^{4}\right)
}
, \\[6pt]
\delta g_{\theta\theta}(\theta)
=& \frac{
8 \ell^{6} \pi r_{0}^{3} (\ell^{2} + r_{0}^{2})^{2}
}{
\left(\ell^{4} - 6 \ell^{2} r_{0}^{2} - 3 r_{0}^{4}\right)
\left(\ell^{6} + 4 \ell^{4} r_{0}^{2} + 3 \ell^{2} r_{0}^{4}
      + r_{0}^{2}(\ell^{4} - 9 r_{0}^{4}) \cos^{2}\theta\right)
}
, \\[6pt]
\delta g_{\varphi\varphi}(\theta)
=&\Bigl(
8 \ell^{6} \pi r_{0}^{3} (\ell^{2} + r_{0}^{2})^{2}
\left(\ell^{2} + r_{0}^{2} + (\ell^{2} - 3 r_{0}^{2}) \cos(2\theta)\right)
\bigl(2 \ell^{4} + 3 \ell^{2} r_{0}^{2} - 3 r_{0}^{4}
+ r_{0}^{2}(\ell^{2} - 3 r_{0}^{2}) \\&\cos(2\theta)\bigr)
\sin^{2}\theta
\Bigr)/\Bigl(
(\ell^{2} + 3 r_{0}^{2})^{3}
\left(\ell^{6} - 7 \ell^{4} r_{0}^{2} + 3 \ell^{2} r_{0}^{4} + 3 r_{0}^{6}\right)
\left(\ell^{2} + r_{0}^{2} + (\ell^{2} - 3 r_{0}^{2}) \cos^{2}\theta\right)^{2}
\Bigr)
, \\[6pt]
\delta g_{\tau\varphi}^{(1)}(\theta)
=& -\,\Bigl(
i\,\ell^{6}\pi r_{0}^{3}
\left(\ell^{2}-3r_{0}^{2}\right)^{3/2}
\left(\ell^{2}+r_{0}^{2}\right)^{5/2}
\bigl(
27\ell^{2}-33r_{0}^{2}
+4\left(9\ell^{2}-7r_{0}^{2}\right)\cos(2\theta)
\\&
+\left(\ell^{2}-3r_{0}^{2}\right)\cos(4\theta)
\bigr)\sin^{2}\theta
\Bigr)
/\Bigl(
2\left(\ell^{2}+3r_{0}^{2}\right)^{2}
\left(\ell^{4}-6\ell^{2}r_{0}^{2}-3r_{0}^{4}\right)^{2}
\left(\ell^{2}+r_{0}^{2}
+\left(\ell^{2}-3r_{0}^{2}\right)\cos^{2}\theta\right)^{2}
\Bigr)
, \\[6pt]
\delta g_{\tau\varphi}^{(21)}(\theta)
=& \Bigl(
i\,\ell^{6}\pi r_{0}^{3}
\left(\ell^{2}-3r_{0}^{2}\right)^{3/2}
\left(\ell^{2}+r_{0}^{2}\right)^{5/2}
\bigl(
27\ell^{2}-33r_{0}^{2}
+4\left(9\ell^{2}-7r_{0}^{2}\right)\cos(2\theta)
\\&
+\left(\ell^{2}-3r_{0}^{2}\right)\cos(4\theta)
\bigr)\sin^{2}\theta
\Bigr)
/\Bigl(
2\left(\ell^{2}+3r_{0}^{2}\right)^{2}
\left(\ell^{4}-6\ell^{2}r_{0}^{2}-3r_{0}^{4}\right)^{2}
\left(\ell^{2}+r_{0}^{2}
+\left(\ell^{2}-3r_{0}^{2}\right)\cos^{2}\theta\right)^{2}
\Bigr)
, \\[6pt]
\delta g_{\tau\varphi}^{(22)}(\theta)
=& \Bigl(
4\,i\,\ell^{6}\pi r_{0}^{3}
\sqrt{\ell^{2}-3r_{0}^{2}}\,
\left(\ell^{2}+r_{0}^{2}\right)^{3/2}
\Bigl(
-9\ell^{10}-35\ell^{8}r_{0}^{2}
+210\ell^{6}r_{0}^{4}
+198\ell^{4}r_{0}^{6}-9\ell^{2}r_{0}^{8}
\\&
-99r_{0}^{10}
+\left(\ell^{2}-3r_{0}^{2}\right)
\left(\ell^{8}-18\ell^{6}r_{0}^{2}
+40\ell^{4}r_{0}^{4}
+66\ell^{2}r_{0}^{6}
+39r_{0}^{8}\right)\cos(2\theta)
\Bigr)\sin^{2}\theta
\Bigr)
\\&/\Bigl(
(\ell-r_{0})(\ell+r_{0})
\left(\ell^{2}+3r_{0}^{2}\right)^{2}
\left(\ell^{4}-6\ell^{2}r_{0}^{2}-3r_{0}^{4}\right)^{3}\bigl(
3\ell^{2}-r_{0}^{2}
+\left(\ell^{2}-3r_{0}^{2}\right)\cos(2\theta)
\bigr)
\Bigr)
 .
\end{aligned}
\end{equation}

\section{The lifted eigenvalues}
\label{App: Lifted eigenvalue}
Here we show the explicit dependence of the lifted eigenvalues on the background and perturbed configurations.
\begin{equation}
\label{Eq: Explicit lifted eigenvalues n>2}
\begin{aligned}
\Lambda_n(\theta)=&\frac{1}{8\,(-1+n^{2})\,g_{1}(\theta)^{3}\,g_{2}(\theta)^{2}\,g_{3}(\theta)}\\
&\Big(
-2\,g_{2}(\theta)\,g_{3}(\theta)
\Big[\delta g_{tt}(\theta) + \delta g_{yy}^{(1)}(\theta) 
- k\big( 2i\,\delta g_{t\varphi}^{(1)}(\theta) + k\,\delta g_{\varphi\varphi}(\theta) \big) \Big]
\big(g_{1}'(\theta)\big)^{2} \\[0.4em]
&\quad + g_{1}(\theta)\Big\{
g_{1}'(\theta)\Big[
g_{2}(\theta)\Big(\delta g_{tt}(\theta) + \delta g_{yy}^{(1)}(\theta)
- k\big( 2i\,\delta g_{t\varphi}^{(1)}(\theta) + k\,\delta g_{\varphi\varphi}(\theta) \big)\Big)g_{3}'(\theta) \\[0.4em]
&\qquad + g_{3}(\theta)\Big(
-\delta g_{tt}(\theta) g_{2}'(\theta)
+ 2ik\,\delta g_{t\varphi}^{(1)}(\theta) g_{2}'(\theta)
- \delta g_{yy}^{(1)}(\theta) g_{2}'(\theta)
+ k^{2} \delta g_{\varphi\varphi}(\theta) g_{2}'(\theta) \\[0.4em]
&\qquad\quad + 2\,g_{2}(\theta)(\delta g_{tt})'(\theta)
- 4ik\,g_{2}(\theta)(\delta g_{t\varphi}^{(1)})'(\theta)
+ 2\,g_{2}(\theta)(\delta g_{yy}^{(1)})'(\theta)
- 2k^{2} g_{2}(\theta)(\delta g_{\varphi\varphi})'(\theta)
\Big)\Big] \\[0.4em]
&\qquad + 2\,g_{2}(\theta) g_{3}(\theta)
\Big[\delta g_{tt}(\theta) + \delta g_{yy}^{(1)}(\theta) 
- k\big( 2i\,\delta g_{t\varphi}^{(1)}(\theta) + k\,\delta g_{\varphi\varphi}(\theta) \big) \Big]
g_{1}''(\theta)
\Big\} \\[0.4em]
&\quad - g_{1}(\theta)^{2} \Big\{
2\,g_{2}(\theta)^{2}\,\delta g_{\varphi\varphi}(\theta) 
- g_{3}(\theta) g_{2}'(\theta)
\Big[(\delta g_{tt})'(\theta) + (\delta g_{yy}^{(1)})'(\theta) 
- k\big( 2i(\delta g_{t\varphi}^{(1)})'(\theta) + k(\delta g_{\varphi\varphi})'(\theta) \big) \Big] \\[0.4em]
&\qquad + g_{2}(\theta)\Big[
g_{3}'(\theta)
\Big((\delta g_{tt})'(\theta) + (\delta g_{yy}^{(1)})'(\theta) 
- k\big( 2i(\delta g_{t\varphi}^{(1)})'(\theta) + k(\delta g_{\varphi\varphi})'(\theta) \big) \Big) \\[0.4em]
&\qquad\quad + 2\,g_{3}(\theta)\Big(
\delta g_{\theta\theta}(\theta) 
+ (\delta g_{tt})''(\theta) 
+ (\delta g_{yy}^{(1)})''(\theta) 
- k\big( 2i(\delta g_{t\varphi}^{(1)})''(\theta) + k(\delta g_{\varphi\varphi})''(\theta) \big)
\Big)\Big]
\Big\}
\Big)
\end{aligned}
\end{equation}

\begin{equation}
\label{Eq: Explicit lifted eigenvalue n=2}
\begin{aligned}
\Lambda_2=&\frac{1}{24\,g_{1}(\theta)^{3}\,g_{2}(\theta)^{2}\,g_{3}(\theta)}\\
&\Big(
-2\,g_{2}(\theta)\,g_{3}(\theta)
\Big[\delta g_{tt}(\theta) + \delta g_{yy}^{(1)}(\theta)
- k\big( 2\,i\,\delta g_{t\varphi}^{(1)}(\theta) + k\,\delta g_{\varphi\varphi}(\theta) \big) \Big]
\big(g_{1}'(\theta)\big)^{2} \\[0.4em]
&\quad + g_{1}(\theta)\Big\{
3\,g_{2}(\theta)^{2} g_{3}(\theta)
\Big[ -\delta g_{tt}(\theta) + 2\,ik\,\delta g_{t\varphi}^{(1)}(\theta)
+ \delta g_{yy}^{(1)}(\theta) + \delta g_{yy}^{(2)}(\theta)
+ k^{2} \delta g_{\varphi\varphi}(\theta) \Big] \\[0.4em]
&\qquad - g_{3}(\theta)
\Big[ \delta g_{tt}(\theta) + \delta g_{yy}^{(1)}(\theta)
- k\big( 2\,i\,\delta g_{t\varphi}^{(1)}(\theta) + k\,\delta g_{\varphi\varphi}(\theta) \big) \Big]
g_{1}'(\theta) g_{2}'(\theta) \\[0.4em]
&\qquad + g_{2}(\theta)\Big(
g_{1}'(\theta)\Big[
\delta g_{tt}(\theta) g_{3}'(\theta)
- 2\,ik\,\delta g_{t\varphi}^{(1)}(\theta) g_{3}'(\theta)
+ \delta g_{yy}^{(1)}(\theta) g_{3}'(\theta)
- k^{2} \delta g_{\varphi\varphi}(\theta) g_{3}'(\theta) \\[0.4em]
&\qquad\quad + 2\,g_{3}(\theta)(\delta g_{tt})'(\theta)
- 4\,ik\,g_{3}(\theta)(\delta g_{t\varphi}^{(1)})'(\theta)
+ 2\,g_{3}(\theta)(\delta g_{yy}^{(1)})'(\theta)
- 2k^{2} g_{3}(\theta)(\delta g_{\varphi\varphi})'(\theta)
\Big] \\[0.4em]
&\qquad\quad + 2\,g_{3}(\theta)
\Big[\delta g_{tt}(\theta) + \delta g_{yy}^{(1)}(\theta)
- k\big( 2\,i\,\delta g_{t\varphi}^{(1)}(\theta) + k\,\delta g_{\varphi\varphi}(\theta) \big) \Big]
g_{1}''(\theta)
\Big)
\Big\} \\[0.4em]
&\quad - g_{1}(\theta)^{2}\Big\{
2\,g_{2}(\theta)^{2} \delta g_{\varphi\varphi}(\theta)
- g_{3}(\theta) g_{2}'(\theta)
\Big[(\delta g_{tt})'(\theta) + (\delta g_{yy}^{(1)})'(\theta)
- k\big( 2\,i(\delta g_{t\varphi}^{(1)})'(\theta) + k(\delta g_{\varphi\varphi})'(\theta) \big) \Big] \\[0.4em]
&\qquad + g_{2}(\theta)\Big[
g_{3}'(\theta)
\Big[(\delta g_{tt})'(\theta) + (\delta g_{yy}^{(1)})'(\theta)
- k\big( 2\,i(\delta g_{t\varphi}^{(1)})'(\theta) + k(\delta g_{\varphi\varphi})'(\theta) \big) \Big] \\[0.4em]
&\qquad\quad + 2\,g_{3}(\theta)\Big(
\delta g_{\theta\theta}(\theta)
+ (\delta g_{tt})''(\theta) + (\delta g_{yy}^{(1)})''(\theta)
- k\big( 2\,i(\delta g_{t\varphi}^{(1)})''(\theta) + k(\delta g_{\varphi\varphi})''(\theta) \big)
\Big)
\Big]
\Big\}
\Big)
\end{aligned}
\end{equation}

\bibliographystyle{JHEP}

\providecommand{\href}[2]{#2}\begingroup\raggedright\endgroup

\end{document}